\algrenewcommand\algorithmicindent{0.8em}
\newcommand{\LineIf}[2]{%
    \State\algorithmicif\ {#1}\ \algorithmicthen\ {#2}}
\newcommand{\LineIfElse}[3]{%
    \State\algorithmicif\ {#1}\ \algorithmicthen\ {#2}%
    \ \algorithmicelse\ {#3}}
\quad \textbf{returns}: {#3}}%
\newcommand{\true}{\mathsf{true}}
\newcommand{\false}{\mathsf{false}}
\DeclareMathOperator{\scope}{\mathbin{.}}
\newcommand{\FE}[1]{\mathsf{Force}^a_{#1}}
\newcommand{\FS}[1]{\mathsf{Force}^p_{#1}}
\newcommand{\depqbf}{\textsf{DepQBF}\xspace}
\newcommand{\bloqqer}{\textsf{Bloqqer}\xspace}
\newcommand{\bloqqerM}{\textsf{BloqqerM}\xspace}
\newcommand{\minisat}{\textsf{Minisat}\xspace}
\newcommand{\lingeling}{\textsf{Lingeling}\xspace}
\newcommand{\picosat}{\textsf{PicoSat}\xspace}
\newcommand{\qbfcert}{\textsf{QBFCert}\xspace}
\newcommand{\cudd}{\textsf{CuDD}\xspace}
\newcommand{\iprover}{\textsf{iProver}\xspace}
\newcommand{\Abc}{\textsf{ABC}\xspace}
\newcommand{\vltomv}{\textsf{vl2mv}\xspace}
\newcommand{\ratsy}{\textsf{RATSY}\xspace}
\newcommand{\aiger}{\textsf{AIGER}\xspace}
\newcommand{\propsat}{\textsc{PropSat}}
\newcommand{\propsatmodel}{\textsc{PropSatModel}}
\newcommand{\propsatcore}{\textsc{PropUnsatCore}}
\newcommand{\qbfsat}{\textsc{QbfSat}}
\newcommand{\qbfsatmodel}{\textsc{QbfSatModel}}
\begin{document}
\frontmatter          
\pagestyle{plain}
\mainmatter              
\title{SAT-Based Synthesis Methods for Safety Specs%
\thanks{This work was supported in part by the Austrian Science Fund (FWF)
through projects RiSE (S11406-N23 and S11408-N23) and QUAINT (I774-N23), and
by the European Commission through project STANCE (317753).}}
\titlerunning{SAT-Based Synthesis Methods for Safety Specs}
%
\author{Roderick Bloem$^1$ \and
        Robert K\"onighofer$^1$ \and
        Martina Seidl$^2$
}
\authorrunning{Bloem et al.} 
%
%
\institute{$^1$ Institute for Applied Information Processing and
           Communications (IAIK)\\
           Graz University of Technology, Austria.\\
           \vspace{0.1cm}
           $^2$ Institute for Formal Models and Verification\\
           Johannes Kepler University, Linz, Austria.
           }

\maketitle              

\begin{abstract}
Automatic synthesis of hardware components from declarative specifications is an
ambitious endeavor in computer aided design.  Existing synthesis algorithms are
often implemented with Binary Decision Diagrams (BDDs), inheriting their
scalability limitations. Instead of BDDs, we propose several new methods to
synthesize finite-state systems from safety specifications using decision
procedures for the satisfiability of quantified and unquantified Boolean
formulas (SAT-, QBF- and EPR-solvers). The presented approaches are based on
computational learning, templates, or reduction to first-order logic.  We also
present an efficient parallelization, and optimizations to utilize reachability
information and incremental solving. Finally, we compare all methods in an
extensive case study.  Our new methods outperform BDDs and other existing work
on some classes of benchmarks, and our parallelization achieves a super-linear
speedup.  This is an extended version of \cite{BloemKS14}, featuring an
additional appendix.
\keywords{Reactive Synthesis, SAT-Solving, Quantified Boolean Formulas,
          Effectively Propositional Logic.}
\end{abstract}
\section{Introduction}
\label{sec:intro}

Automatic synthesis is an appealing approach to construct correct reactive
systems:  Instead of manually developing a system and verifying it later against
a formal specification, reactive synthesis algorithms can compute a
\emph{correct-by-construction} implementation of a formal specification fully
automatically.  Besides the construction of full systems~\cite{BloemGJPPW07b},
synthesis algorithms are also used in automatic debugging to compute corrections
of erroneous parts of a design~\cite{StaberB07}, or in program sketching, where
``holes'' (parts that are left blank by the designer) are filled
automatically~\cite{Solar-Lezama09}.

This work deals with synthesis of hardware systems from safety specifications.
Safety specifications express that certain ``bad things'' never happen.  This is
an important class of specifications for two reasons.  First, bounded synthesis
approaches~\cite{Ehlers10} can reduce synthesis from richer specifications to
safety synthesis problems.  Second, safety properties often make up the bulk of
a specification, and they can be handled in a compositional manner: the safety
synthesis problem can be solved before the other properties are
handled~\cite{SohailS09}.

One challenge for reactive synthesis is scalability. To address it, synthesis
algorithms are usually symbolic, i.e., they represent states and transitions
using formulas.  The symbolic representations are, in turn, often implemented
using Binary Decision Diagrams (BDDs), because they provide both existential and
universal quantification.  However, it is well known that BDDs explode in size
for certain structures~\cite{BiereCCZ99}.  At the same time, algorithms and
tools to decide the satisfiability of formulas became very efficient over the
last decade.

In this paper, we thus propose several new approaches to use
satisfiability-based methods for the synthesis of reactive systems from safety
specifications.  We focus on the computation of the so-called \emph{winning
region}, i.e., the states from which the specification can be fulfilled, because
extracting an implementation from this winning region is then conceptually easy
(but can be computationally hard).  More specifically, our contributions are as
follows.
\begin{enumerate}
\item
We present a learning-based approach to compute a winning region as a
Conjunctive Normal Form (CNF) formula over the state variables using a solver
for Quantified Boolean Formulas (QBFs)~\cite{LonsingB10}.
\item 
We show how this method can be implemented efficiently using two incremental
SAT-solvers instead of a QBF-solver, and how approximate reachability
information can be used to increase the performance.  We also present a
parallelization that combines different variants of these learning-based
approaches to achieve a super-linear speedup.
\item 
We present a template-based approach to compute a winning region that follows a
given structure with one single QBF-solver call.
\item
We also show that fixing a structure can be avoided when using a solver for
Effectively Propositional Logic (EPR)~\cite{Lewis80}.
\item 
We present extensive experimental results to compare all these methods, to each
other and to previous work.
\end{enumerate}
Our experiments do not reveal \emph{the} new all-purpose synthesis algorithm. We
rather conclude that different methods perform well on different benchmarks, and
that our new approaches outperform existing ones significantly on some classes
of benchmarks.

\textbf{Related Work.} A QBF-based synthesis method for safety specifications
was presented in~\cite{StaberB07}.  Its QBF-encoding can have deep quantifier
nestings and many copies of the transition relation.  In contrast, our approach
uses more but potentially cheaper QBF-queries.
Becker et al.~\cite{BeckerELM12} show how to compute all solutions to a
QBF-problem with computational learning, and how to use such an ALLQBF engine
for synthesis.  In order to compute all losing states (from which the
specification cannot be enforced) their algorithm analyzes all one-step
predecessors of the unsafe states before turning to the two-step predecessors,
an so on. Our learning-based synthesis method is similar, but applies learning
directly to the synthesis problem.  As a result, our synthesis algorithm is more
``greedy''.  Discovered losing states are utilized immediately in the
computation of new losing states, independent of the distance to the unsafe
states. Besides the computation of a winning region, computational learning has
also been used for extracting small circuits from a strategy~\cite{EhlersKH12}.
The basic idea of substituting a QBF-solver with two competing SAT-solvers has
already been presented in~\cite{JanotaS11} and~\cite{MorgensternGS13}. We apply
this idea to our learning-based synthesis algorithm, and adapt it to make
optimal use of incremental SAT-solving in our setting.  Our optimizations to
utilize reachability information in synthesis are based on the concept of
incremental induction, as presented by Bradley for the model-checking algorithm
IC3~\cite{Bradley11}.  These reachability optimizations are completely new in
synthesis, to the best of our knowledge.
Recently, Morgenstern et al.~\cite{MorgensternGS13} proposed a property-directed
synthesis method which is also inspired by IC3~\cite{Bradley11}. Roughly
speaking, it computes the rank (the number of steps in which the environment can
enforce to reach an unsafe state) of the initial state in a lazy manner.  It
maintains over-approximations of states having (no more than) a certain rank. If
the algorithm cannot decide the rank of a state using this information, it
decides the rank of successors first.  This approach is complementary to our
learning-based algorithms.  One fundamental difference is
that~\cite{MorgensternGS13} explores the state space starting from the initial
state, while our algorithms start at the unsafe states.  The main similarity is
that one of our methods also uses two competing SAT-solvers instead of a
QBF-solver.
Templates have already been used to synthesize combinational
circuits~\cite{KojevnikovKY09}, loop invariants~\cite{ErnstPGMPTX07},
repairs~\cite{KonighoferB11}, and missing parts in
programs~\cite{Solar-Lezama09}.  We use this idea for synthesizing a winning
region.  Reducing the safety synthesis problem to EPR is also new, to the best
of our knowledge.

\textbf{Outline.} The rest of this paper is organized as follows.
Section~\ref{sec:prel} introduces basic concepts and notation, and
Section~\ref{sec:synth} discusses synthesis from safety specifications in
general.  Our new synthesis methods are presented in
Sections~\ref{sec:learn_synth} and~\ref{sec:direct}. Section~\ref{sec:exp_res}
contains our experimental evaluation, and Section~\ref{sec:concl} concludes.
This is an extended version of~\cite{BloemKS14}, featuring an additional
appendix.

\section{Preliminaries}
\label{sec:prel}

We assume familiarity with propositional logic, but repeat the notions important
for this paper.  Refer to~\cite{hos} for a more gentle introduction.

\textbf{Basic Notation.}
In propositional logic, a \emph{literal} is a Boolean variable or its negation.
A \emph{cube} is a conjunction of literals, and a \emph{clause} is a disjunction
of literals.  A formula in propositional logic is in \emph{Conjunctive Normal
Form (CNF)} if it is a conjunction of clauses. A cube describes a (potentially
partial) assignment to Boolean variables: unnegated variables are $\true$,
negated ones are $\false$. We denote vectors of variables with overlines, and
corresponding cubes in bold. E.g., $\mathbf{x}$ is a cube over the variable
vector $\overline{x}=(x_1,\ldots,x_n)$. We treat vectors of variables like sets
if the order does not matter. An $\overline{x}$-\emph{minterm} is a cube that
contains all variables of $\overline{x}$. Cube $\mathbf{x}_1$ is a
\emph{sub-cube} of $\mathbf{x}_2$, written $\mathbf{x}_1 \subseteq
\mathbf{x}_2$, if the literals of $\mathbf{x}_1$ form a subset of the literals
in $\mathbf{x}_2$.  We use the same notation for \emph{sub-clauses}. Let
$F(\overline{x})$ be a propositional formula over the variables $\overline{x}$,
and let $\mathbf{x}$ be an $\overline{x}$-minterm.  We write $\mathbf{x} \models
F(\overline{x})$ to denote that the assignment $\mathbf{x}$ satisfies
$F(\overline{x})$.  We will omit the brackets listing variable dependencies if
they are irrelevant or clear from the context (i.e., we often write $F$ instead
of $F(\overline{x})$).

\textbf{Decision Procedures.}
A \emph{SAT-solver} is a tool that takes a propositional formula (usually in
CNF) and decides its satisfiability. Let $F(\overline{x}, \overline{y}, \ldots)$
be a propositional formula over several vectors $\overline{x}, \overline{y},
\ldots$ of Boolean variables.  We write $\textsf{sat} := \propsat(F)$ for a
SAT-solver call. The variable $\textsf{sat}$ is assigned $\true$ if $F$ is
satisfiable, and $\false$ otherwise.  We write $(\textsf{sat}, \mathbf{x},
\mathbf{y}, \ldots) := \propsatmodel(F(\overline{x}, \overline{y}, \ldots))$ to
obtain a satisfying assignment in the form of cubes $\mathbf{x}, \mathbf{y},
\ldots$ over the different variable vectors.  Let $\mathbf{a}$ be a cube. We
write $\mathbf{b} := \propsatcore(\mathbf{a}, F)$ to denote the extraction of an
unsatisfiable core: Given that $\mathbf{a} \wedge F$ is unsatisfiable,
$\mathbf{b} \subseteq \mathbf{a}$ will be a sub-cube of $\mathbf{a}$ such that
$\mathbf{b} \wedge F$ is still unsatisfiable.
\emph{Quantified Boolean Formulas (QBFs)} extend propositional logic with
universal ($\forall$) and existential ($\exists$) quantifiers.  A QBF (in Prenex
Conjunctive Normal Form) is a formula $Q_1\overline{x} \scope Q_2\overline{y}
\scope \ldots F(\overline{x},\overline{y}, \ldots)$, where $Q_i \in \{\forall,
\exists\}$ and $F$ is a propositional formula in CNF.  Here, $Q_i\overline{x}$
is a shorthand for $Q_i x_1\ldots Q_i x_n$ with $\overline{x} = (x_1 \ldots
x_n)$.  The quantifiers have their expected semantics. A \emph{QBF-solver} takes
a QBF and decides its satisfiability. We write
$\textsf{sat} := \qbfsat(Q_1\overline{x}
\scope Q_2\overline{y} \scope \ldots F(\overline{x},\overline{y}, \ldots))$ or
$(\textsf{sat}, \mathbf{a}, \mathbf{b} \ldots) :=
\qbfsatmodel(
\exists\overline{a} \scope
\exists\overline{b} \ldots
Q_1 \overline{x} \scope
Q_2 \overline{y} \ldots
F(\overline{a},\overline{b}, \ldots,
\overline{x},\overline{y}, \ldots))$
to denote calls to a QBF-solver.  Note that $\qbfsatmodel$ only extracts
assignments for variables that are quantified existentially on the outermost
level.

\textbf{Transition Systems.}
A \emph{controllable finite-state transition system} is a tuple $\mathcal{S} =
(\overline{x}, \overline{i}, \overline{c}, I, T)$, where $\overline{x}$ is a
vector of Boolean state variables, $\overline{i}$ is a vector of uncontrollable
input variables, $\overline{c}$ is a vector of controllable input variables,
$I(\overline{x})$ is an initial condition, and
$T(\overline{x},\overline{i},\overline{c},\overline{x}')$ is a transition
relation with $\overline{x}'$ denoting the next-state copy of $\overline{x}$.  A
\emph{state} of $\mathcal{S}$ is an assignment to the $\overline{x}$-variables,
usually represented as $\overline{x}$-minterm $\mathbf{x}$.  A formula
$F(\overline{x})$ represents the set of all states $\mathbf{x}$ for which
$\mathbf{x}\models F(\overline{x})$.  Priming a formula $F$ to obtain $F'$ means
that all variables in the formula are primed, i.e., replaced by their next-state
copy. An \emph{execution} of $\mathcal{S}$ is an infinite sequence
$\mathbf{x}_0, \mathbf{x}_1 \ldots $ of states such that $\mathbf{x}_0 \models
I$ and for all pairs $(\mathbf{x}_j, \mathbf{x}_{j+1})$ there exist some input
assignment $\mathbf{i}_j,\mathbf{c}_j$ such that $\mathbf{x}_j \wedge
\mathbf{i}_j \wedge \mathbf{c}_j \wedge \mathbf{x}_{j+1}' \models T$.  A state
$\mathbf{x}$ is \emph{reachable} in $\mathcal{S}$ if there exists an execution
$\mathbf{x}_0, \mathbf{x}_1 \ldots $ and an index $j$ such that $\mathbf{x} =
\mathbf{x}_j$. The execution of $\mathcal{S}$ is controlled by two
\emph{players}: the \emph{protagonist} and the \emph{antagonist}.  In every step
$j$, the antagonist first chooses an assignment $\mathbf{i}_j$ to the
uncontrollable inputs $\overline{i}$.  Next, the protagonist picks an assignment
$\mathbf{c}_j$ to the controllable inputs $\overline{c}$.  The transition
relation $T$ then computes the next state $\mathbf{x}_{j+1}$.  This is repeated
indefinitely. We assume that $T$ is \emph{complete} and \emph{deterministic},
i.e., for every state and input assignment, there exists exactly one successor
state.  More formally, we have that
$\forall \overline{x}, \overline{i}, \overline{c} \scope \exists
\overline{x'} \scope T$
and
$\forall \overline{x}, \overline{i}, \overline{c},
\overline{x_1}', \overline{x_2}' \scope
(T(\overline{x},\overline{i},\overline{c},\overline{x_1}') \wedge
T(\overline{x},\overline{i},\overline{c},\overline{x_2}'))
\Rightarrow (\overline{x_1}' = \overline{x_2}').
$
Let $F(\overline{x})$ be a formula representing a certain set of states. The
mixed pre-image $\FS{1}(F) = \forall \overline{i} \scope \exists
\overline{c},\overline{x}' \scope T \wedge F'$ represents all states from which
the protagonist can enforce to reach a state of $F$ in exactly one step.
Analogously, $\FE{1}(F) = \exists \overline{i} \scope \forall \overline{c}\scope
\exists \overline{x}' \scope T \wedge F'$ gives all states from which the
antagonist can enforce to visit $F$ in one step.

\textbf{Synthesis Problem.}
A (memoryless) \emph{controller} for $\mathcal{S}$ is a function $f:
2^{\overline{x}} \times 2^{\overline{i}} \rightarrow 2^{\overline{c}}$ to define
the control signals $\overline{c}$ based on the current state of $\mathcal{S}$
and the uncontrollable inputs $\overline{i}$. Let $P(\overline{x})$ be a formula
characterizing the set of safe states in a transition system $\mathcal{S}$. An
execution $\mathbf{x}_0, \mathbf{x}_1 \ldots $ is \emph{safe} if it visits only
safe states, i.e., $\mathbf{x}_j \models P$ for all $j$. A controller $f$ for
$\mathcal{S}$ is \emph{safe} if all executions of $\mathcal{S}$ are safe, given
that the control signals are computed by $f$.  Formally, $f$ is safe if there
exists no sequence of pairs $(\mathbf{x}_0, \mathbf{i}_0), (\mathbf{x}_1,
\mathbf{i}_1), \ldots$ such that (a) $\mathbf{x}_0 \models I$, (b) $\mathbf{x}_j
\wedge \mathbf{i}_j \wedge f(\mathbf{x}_j, \mathbf{i}_j) \wedge
\mathbf{x}_{j+1}' \models T$ for all $j \ge 0$, and (c) $\mathbf{x}_j\not\models
P$ for some $j$. The problem addressed in this paper is to synthesize such a
safe controller.  We call a pair $(\mathcal{S},P)$ a \emph{specification} of a
safety synthesis problem.  A specification is \emph{realizable} if a safe
controller exists. A \emph{safe implementation} $\mathcal{I}$ of a specification
$(\mathcal{S},P)$ with $\mathcal{S} = (\overline{x}, \overline{i}, \overline{c},
I(\overline{x}), T(\overline{x}, \overline{i}, \overline{c}, \overline{x}'))$ is
a transition system $\mathcal{I} = (\overline{x}, \overline{i}, \emptyset,
I(\overline{x}), T(\overline{x}, \overline{i}, f(\overline{x}, \overline{i}),
\overline{x}'))$, where $f$ is a safe controller for $\mathcal{S}$.

\section{Synthesis from Safety Specifications}
\label{sec:synth}

This paper presents several approaches for synthesizing a safe controller for a
fine-state transition system $\mathcal{S}$. The synthesis problem can be seen as
a game between the protagonist controlling the $\overline{c}$-variables and the
antagonist controlling the $\overline{i}$-variables during an
execution~\cite{MorgensternGS13}. The protagonist wins the game if the execution
never visits an unsafe state $\mathbf{x}\not\models P$.  Otherwise, the
antagonist wins. A safe controller for $\mathcal{S}$ is now simply a strategy
for the protagonist to win the game. Standard game-based synthesis methods can
be used to compute such a winning strategy~\cite{Thomas95}.  These game-based
methods usually work in two steps. First, a so-called \emph{winning region} is
computed.  A winning region is a set of states $W(\overline{x})$ from which a
winning strategy for the protagonist exists.  Second, a winning strategy is
derived from (intermediate results in the computation of) the winning region.
Most of the synthesis approaches presented in the following implement this
two-step procedure. For safety synthesis problems, the following three
conditions are sufficient for a winning region $W(\overline{x})$ to be turned
into a winning strategy.
\begin{enumerate}[I)]
\item Every initial state is in the winning region: $I \Rightarrow W$.
\item The winning region contains only safe states: $W \Rightarrow P$.
\item The protagonist can enforce to stay in the winning region:
$W \Rightarrow \FS{1}(W)$.
\end{enumerate}
A specification is realizable if and only if such a winning region exists.
Hence, it suffices to search for a formula that satisfies these three
constraints.
Deriving a winning strategy $f: 2^{\overline{x}} \times 2^{\overline{i}}
\rightarrow 2^{\overline{c}}$ from such a winning region is then conceptually
easy: $f$ must always pick control signal values such that the successor state
is in $W$ again.  This is always possible due to (I) and (III). We therefore
focus on approaches to efficiently compute a winning region that satisfies
(I)-(III), and leave an investigation of methods for the extraction of a
concrete controller to future work\footnote{In our implementation, we currently
extract circuits by computing Skolem functions for the $\overline{c}$ signals in
$
\forall \overline{x},\overline{i} \scope
\exists \overline{c},\overline{x}'\scope
(\neg W) \vee (T \wedge W')
$
using the \qbfcert~\cite{NiemetzPLSB12} framework.  However, there are other
options like learning~\cite{EhlersKH12}, interpolation~\cite{JiangLH09}, or
templates~\cite{KojevnikovKY09}.}. First, we will briefly discuss an
attractor-based approach which is often implemented with BDDs~\cite{Thomas95}.
Then, we will present several new ideas which are more suitable for an
implementation using SAT- and QBF-solvers.

\subsection{Standard Attractor-Based Synthesis Approach}
\label{sec:std}
The synthesis method presented in this section can be seen as the standard
textbook method for solving safety games~\cite{Thomas95}.  Starting with all
safe states $P$,
\begin{wrapfigure}[9]{r}{0.45\textwidth}
\vspace{-0.7cm}
\begin{algorithmic}[1]
\ProcedureLi{SafeSynth}
             {$\mathcal{S},P$}
             {$W$ or $\false$}
  \State $F := P$
  \While{$F$ changes}
    \State $F := F \wedge \FS{1}(F)$
    \If{$I \not \Rightarrow F$}
      \State \textbf{return} $\false$
    \EndIf
  \EndWhile
  \State \textbf{return} $F$
\EndProcedure  
\end{algorithmic}
\end{wrapfigure}
the \textsc{SafeSynth} algorithm reduces $F$ to states from which the
protagonist can enforce to go back to $F$ until $F$ does not change anymore.  If
an initial state is removed from $F$, $\false$ is returned to signal
unrealizability.  Otherwise, $F$ will finally converge to a fixpoint, which is a
proper winning region $W$ ($W = \nu F. P \wedge \FS{1}(F)$ in $\mu$-calculus
notation). \textsc{SafeSynth} is well suited for an implementation using BDDs
because the set of all states satisfying $\FS{1}(F)$ can be computed with just a
few BDD operations, and the comparison to decide if $F$ changed can be done in
constant time.
A straightforward implementation using a QBF-solver maintains a growing
quantified formula to represent $F$ (i.e,
$F_0 = P$,
$F_1 = \exists \overline{x} \scope
       \forall \overline{i} \scope
       \exists \overline{c}, \overline{x}'\scope
       P \wedge T \wedge P'$,
and so on), and calls a QBF-solver to decide if $F$ changed semantically from
one iteration to the next one.  This approach is explained in~\cite{StaberB07}.
In iteration $n$, $F$ contains $n$ copies of the transition relation and $2n$
quantifier alternations.  This means that the difficulty of the QBF queries
increases significantly with the number of iterations, which may be prohibitive
for large specification.  The resulting winning region $W$ is a quantified
formula as well.
An alternative QBF-based implementation~\cite{BeckerELM12} eliminates the
quantifiers from $F$ in every iteration by computing all satisfying assignments
of $F$.  The next section explains how this idea can be improved.

\section{Learning-Based Synthesis Approaches}
\label{sec:learn_synth}

\begin{figure}[tb]
 \begin{minipage}{.3\textwidth}
  \begin{center}
    \includegraphics[width=\textwidth]{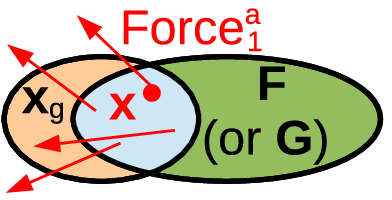}
    \caption{\textsc{LearnQbf}: working principle.}
    \label{fig:learn_qbf}
  \end{center}
 \end{minipage}
 \hspace{0.4cm}
 \begin{minipage}{.3\textwidth}
   \begin{center}
    \includegraphics[width=\textwidth]{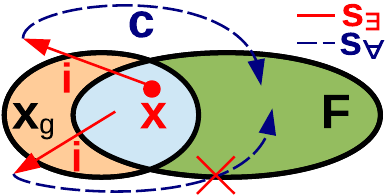}
    \caption{\textsc{LearnSat}: working principle.}
    \label{fig:learn_sat}
  \end{center}
 \end{minipage}
 \hspace{0.4cm}
 \begin{minipage}{.3\textwidth}
   \begin{center}
    \includegraphics[width=\textwidth]{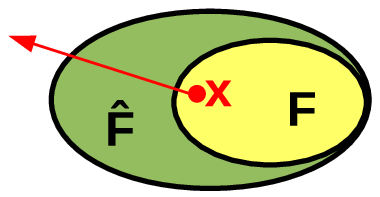}
    \caption{\textsc{LearnSat}: Using $\hat{F}$ for incremental solving.}
    \label{fig:learn_sat2}
  \end{center}
 \end{minipage}
\end{figure}

Becker et al.~\cite{BeckerELM12} show how \textsc{SafeSynth} can be implemented
with a QBF-solver by eliminating the quantifiers in $F$ with computational
learning.  This gives a CNF representation of every $F$-iterate.  However, we
are only interested in the final value $W$ of $F$.  This allows for a tighter
and more efficient integration of the learning approach with
the~\textsc{SafeSynth} algorithm.
  
\subsection{Learning-Based Synthesis using a QBF-Solver}
\label{sec:qbf_learn}

The following algorithm uses computational learning to compute a winning region
in CNF using a QBF-solver.  It returns $\false$ in case of unrealizability.

\begin{algorithmic}[1]
\ProcedureRet{LearnQbf}
             {$(\overline{x}, \overline{i}, \overline{c}, I, T),P$}
             {$W$ or $\false$}
  \State $F := P$
  \State // Check if there exists an
         $\mathbf{x} \models F \wedge \FE{1}(\neg F)$:
  \While{$\mathsf{sat}$ with $(\mathsf{sat},\mathbf{x})$:=$\qbfsatmodel(
      \exists \overline{x},\overline{i} \scope
      \forall \overline{c} \scope
      \exists \overline{x}' \scope
      F \wedge T \wedge \neg F')$}\label{alg:learn_ce}
    \State // Find a sub-cube $\mathbf{x}_g \subseteq \mathbf{x}$ such that
           $(\mathbf{x}_g \wedge F) \Rightarrow \FE{1}(\neg F)$:
    \State $\mathbf{x}_g := \mathbf{x}$
    \For{$l\in \textsc{literals}(\mathbf{x})$}
      \State $\mathbf{x}_t := \mathbf{x}_g \setminus \{l\}$,             
             \algorithmicif\ \textsf{optimize} \algorithmicthen\
             $G:= F \wedge \neg \mathbf{x}_g$ \algorithmicelse\
             $G:= F$~\label{alg:learn_op}
      \If{$\neg\qbfsat(
      \exists \overline{x} \scope
      \forall \overline{i} \scope
      \exists \overline{c},\overline{x}' \scope
      \mathbf{x}_t \wedge G \wedge T \wedge G')$}~\label{alg:learn_gen}
        \State $\mathbf{x}_g := \mathbf{x}_t$
      \EndIf
    \EndFor
    \LineIf{$\propsat(\mathbf{x}_g \wedge I)$}{\textbf{return} $\false$}
    \State $F := F \wedge \neg \mathbf{x}_g$\label{alg:learn_rem}
  \EndWhile
  \State \textbf{return } $F$
\EndProcedureRet
\end{algorithmic}
The working principle of \textsc{LearnQbf} is illustrated in
Figure~\ref{fig:learn_qbf}.  It starts with the initial guess $F$ that the
winning region contains all safe states $P$.  Line~\ref{alg:learn_ce} then
checks for a counterexample to the correctness of this guess in form of a state
$\mathbf{x}\models F\wedge \FE{1}(\neg F)$ from which the antagonist can enforce
to leave $F$. Assume that $\textsf{optimize}=\false$ in line~\ref{alg:learn_op}
for now, i.e., $G$ is always just $F$. The inner loop now generalizes the
state-cube $\mathbf{x}$ to $\mathbf{x}_g \subseteq \mathbf{x}$ by dropping
literals as long as $\mathbf{x}_g$ does not contain a single state from which
the protagonist can enforce to stay in $F$. During and after the execution of
the inner loop, $\mathbf{x}_g$ contains only states that must be removed from
$F$, or have already been removed from $F$ before.  Hence, as an optimization,
we can treat the states of $\mathbf{x}_g$ as if they were removed from $F$
already \emph{during} the cube minimization.  This is done with
$\textsf{optimize}=\true$ in line~\ref{alg:learn_op} by setting $G = F \wedge
\neg \mathbf{x}_g$ instead of $G=F$.  This optimization can lead to smaller
cubes and less iterations.
If the final cube $\mathbf{x}_g$ contains an initial state, the algorithm
signals unrealizability by returning $\false$. Otherwise, it removes the states
of $\mathbf{x}_g$ from $F$ by adding the clause $\neg \mathbf{x}_g$, and
continues by checking for other counterexamples.  If $P$ is in CNF, then the
final result in $F$ will also be in CNF.  If $T$ is also in CNF, then the query
of line~\ref{alg:learn_gen} can be constructed by merging clause sets.  Only for
the query in line~\ref{alg:learn_ce}, a CNF encoding of $\neg F'$ is necessary.
This can be achieved, e.g., using a Plaisted-Greenbaum
transformation~\cite{PlaistedG86}, which causes only a linear blow-up of the
formula.

\textbf{Heuristics.} We observed that the generalization (the inner loop of
\textsc{LearnQbf}) is often fast compared to the computation of counterexamples
in Line~\ref{alg:learn_ce}.  As a heuristic, we therefore propose to compute not
only one but all (or several) minimal generalizations $\mathbf{x}_g\subseteq
\mathbf{x}$ to every counterexample-state $\mathbf{x}$, e.g., using a hitting
set tree algorithm~\cite{Reiter87}.  Another observation is that newly
discovered clauses can render earlier clauses redundant in $F$. In every
iteration, we therefore ``compress'' $F$ by removing clauses that are implied by
others.  This can be done cheaply with incremental SAT-solving, and simplifies
the CNF for $\neg F'$ in line~\ref{alg:learn_ce}. Iterating over existing
clauses and trying to minimize them further at a later point in time did not
lead to significant improvements in our experiments.

\subsection{Learning-Based Synthesis using SAT-Solvers}
\label{sec:sat_learn}
\textsc{LearnQbf} can also be implemented with SAT-solving instead of
QBF-solving.  The basic idea is to use two competing SAT-solvers for the two
different quantifier types, as done in~\cite{JanotaS11}.  However, we interweave
this concept with the synthesis algorithm to better utilize incremental solving
capabilities of modern SAT-solvers.
\begin{algorithmic}[1]
\ProcedureRet{LearnSat}
             {$(\overline{x}, \overline{i}, \overline{c}, I, T),P$}
             {$W$ or $\false$}
  \State $F := P$, $\hat{F} := P$, $U := \true$, $\textsf{precise} := \true$
  \While{$\true$}
    \State $(\mathsf{sat},\mathbf{x},\mathbf{i}) := \propsatmodel(
            F \wedge U \wedge T \wedge \neg \hat{F}')$ \label{alg:sa0}
    \If{$\neg \mathsf{sat}$}
      \LineIf{$\textsf{precise}$}{\textbf{return} $F$}
      \State $U := \true$, $\hat{F} := F$, $\textsf{precise} := \true$
      \label{alg:sar}
    \Else
      \State $(\mathsf{sat},\mathbf{c}) := \propsatmodel(
              F \wedge \mathbf{x} \wedge \mathbf{i} \wedge T \wedge F')$
              \label{alg:sa1}
      \If{$\neg \mathsf{sat}$}
        \State $\mathbf{x}_g := \propsatcore(\mathbf{x},
        F \wedge \mathbf{i} \wedge T \wedge F')$ \label{alg:sa2}
        \LineIf{$\propsat(\mathbf{x}_g \wedge I)$}
        {\textbf{return} $\false$} \label{alg:saun}
        \State $F := F \wedge \neg \mathbf{x}_g$ \label{alg:updf}
        \LineIfElse{\textsf{optimize}}{$\textsf{precise}:= \false$}
        {$\hat{F} := F$, $U:= \true$}
        \label{alg:sanc}
      \Else
        \State $U := U \wedge \neg \propsatcore(
        \mathbf{x} \wedge \mathbf{i},
        \mathbf{c} \wedge F \wedge U \wedge T \wedge \neg \hat{F}')$
        \label{alg:sa3}
      \EndIf
    \EndIf
  \EndWhile
\EndProcedureRet
\end{algorithmic}

\textbf{Data Structures.} Besides the current guess $F$ of the winning region
$W$, \textsc{LearnSat} also maintains a copy $\hat{F}$ of $F$ that is updated
only lazily.  This allows for better utilization of incremental SAT-solving, and
will be explained below. The flag $\textsf{precise}$ indicates if $\hat{F}=F$.
The variable $U$ stores a CNF formula over the $\overline{x}$ and $\overline{i}$
variables.  Intuitively, $U$ contains state-input combinations which are not
useful for the antagonist when trying to break out of $F$.

\textbf{Working Principle.}
The working principle of \textsc{LearnSat} is illustrated in
Figure~\ref{fig:learn_sat}.  For the moment, let \textsf{optimize} be $\false$,
i.e., $\hat{F}$ is always $F$.  To deal with the mixed quantification inherent
in synthesis, \textsc{LearnSat} uses two competing SAT-solvers, $s_\exists$ and
$s_\forall$. In line~\ref{alg:sa0}, $s_\exists$ tries to find a possibility for
the antagonist to leave $F$.  It is computed as a state-input pair
$(\mathbf{x},\mathbf{i})$ for which some $\overline{c}$-value leads to a $\neg
F$ successor.  Next, in line~\ref{alg:sa1}, $s_\forall$ searches for a response
$\mathbf{c}$ of the protagonist to avoid leaving $F$. If no such response
exists, then $\mathbf{x}$ must be excluded from $F$. However, instead of
excluding this one state only, we generalize the state-cube $\mathbf{x}$ by
dropping literals to obtain $\mathbf{x}_g$, representing a larger region of
states for which input $\mathbf{i}$ can be used by the antagonist to enforce
leaving $F$.  This is done by computing the unsatisfiable core with respect to
the literals of $\mathbf{x}$ in line~\ref{alg:sa2}.  Otherwise, if $s_\forall$
finds a response $\mathbf{c}$, then the state-input pair
$(\mathbf{x},\mathbf{i})$ is not helpful for the antagonist to break out of $F$.
 It must be removed from $U$ to avoid that the same pair is tried again.
Instead of removing just $(\mathbf{x},\mathbf{i})$, we generalize it again by
dropping literals as long as the control value $\mathbf{c}$ prevents leaving
$F$.  This is done by computing an unsatisfiable core over the literals in
$\mathbf{x} \wedge \mathbf{i}$ in line~\ref{alg:sa3}.\\ As soon as $F$ changes,
$U$ must be reset to $\true$ (line~\ref{alg:sanc}): even if a state-input pair
is not helpful for breaking out of $F$, it may be helpful for breaking out of a
smaller $F$. If line~\ref{alg:sa0} reports unsatisfiability, then the antagonist
cannot enforce to leave $F$, i.e., $F$ is a winning region ($\textsf{precise} =
\true$ if $\textsf{optimize} = \false$).  If an initial state is removed from
$F$, then the specification is unrealizable (line~\ref{alg:saun}).

\textbf{Using $\hat{F}$ to Support Incremental Solving.}
Now consider the case where \textsf{optimize} is $\true$.  In
line~\ref{alg:updf}, new clauses are added only to $F$ but not to $\hat{F}$.
This ensures that $F \Rightarrow \hat{F}$, but $F$ can be strictly stronger.
See Figure~\ref{fig:learn_sat2} for an illustration.  Line~\ref{alg:sa0} now
searches for a transition (respecting $U$) from $F$ to $\neg \hat{F}$.  If such
a transition is found, then it also leads from $F$ to $\neg F$.  However, if no
such transition from $F$ to $\neg \hat{F}$ exists, then this does not mean that
there is no transition from $F$ to $\neg F$.  Hence, in case of
unsatisfiability, we update $\hat{F}$ to $F$ and store the fact that $\hat{F}$
is now accurate by setting $\textsf{precise} = \true$.  If the call in
line~\ref{alg:sa0} reports unsatisfiability with $\textsf{precise} = \true$,
then there is definitely no way for the antagonist to leave $F$ and the
computation of $F$ is done.  The reason for not updating $\hat{F}$ immediately
is that solver $s_\exists$ can be used incrementally until the next update,
because new clauses are only added to $F$ and $U$.  Only when reaching
line~\ref{alg:sar}, a new incremental session has to be started.  This
optimization proved to be very beneficial in our experiments.  Solver
$s_\forall$ can be used incrementally throughout the entire algorithm anyway,
because $F$ gets updated with new clauses only.

\subsection{Utilizing Unreachable States}
\label{sec:reach}

This section presents an optimization of \textsc{LearnQbf} to utilize
(un)reachability information.  It works analogously for \textsc{LearnSat},
though.  Recall that the variable $G$ in \textsc{LearnQbf} stores the current
over-approximation of the winning region $W$ (cf.~Section.~\ref{sec:qbf_learn}).
\textsc{LearnQbf} generalizes a counterexample-state $\mathbf{x}$ to a region
$\mathbf{x}_g$ such that $G\wedge \mathbf{x}_g \Rightarrow \FE{1}(\neg G)$,
i.e., $G\wedge \mathbf{x}_g$ contains only states from which the antagonist can
enforce to leave $G$.  Let $R(\overline{x})$ be an over-approximation of the
states reachable in $\mathcal{S}$.  That is, $R$ contains at least all states
that could appear in an execution of $\mathcal{S}$.  It is sufficient to ensure
$G\wedge \mathbf{x}_g \wedge R \Rightarrow \FE{1}(\neg G)$ because unreachable
states can be excluded from $G$ even if they are winning for the protagonist.
This can lead to smaller cubes and faster convergence.

There exist various methods to compute reachable states, both precisely and as
over-approximation~\cite{MoonKSS99}.  The current over-approximation $G$ of the
winning region $W$ can also be used: Given that the specification is realizable
(we will discuss the unrealizable case below), the protagonist will enforce that
$W$ is never left.   Hence, at any point in time, $G$ is itself an
over-approximation of the reachable states, not necessarily in $\mathcal{S}$,
but definitely in the final implementation $\mathcal{I}$ (given that
$\mathcal{I}$ is derived from $W$ and $W\Rightarrow G$).  Hence, stronger
reachability information can be obtained by considering only transitions that
remain in $G$.

In our optimization, we do not explicitly compute an over-approximation of the
reachable states, but rather exploit ideas from the property directed
reachability algorithm IC3~\cite{Bradley11}:  By induction, we know that a state
$\mathbf{x}$ is definitely unreachable in $\mathcal{I}$ if
$\mathbf{x}\not\models I$ and $\neg \mathbf{x} \wedge G \wedge T \Rightarrow
\neg \mathbf{x}'$. Otherwise, $\mathbf{x}$ could be reachable. The same holds
for sets of states. By adding these two constraints, we modify the
generalization check in line~\ref{alg:learn_gen} of \textsc{LearnQbf} to
\begin{align}
\qbfsat(
&
      \exists \overline{x}^*, \overline{i}^*, \overline{c}^*\scope
      \exists \overline{x} \scope
      \forall \overline{i} \scope
      \exists \overline{c},\overline{x}' \scope \notag
\\&
      (I(\overline{x}) \vee
       G(\overline{x}^*) \wedge
       \neg \mathbf{x}_g(\overline{x}^*) \wedge
       T(\overline{x}^*, \overline{i}^*, \overline{c}^*, \overline{x}))
       \wedge \label{eq:gen_reach}
\\&
      \mathbf{x}_g(\overline{x}) \wedge
      G(\overline{x}) \wedge
      T(\overline{x},\overline{i},\overline{c},\overline{x}') \wedge
      G(\overline{x}')).\notag
\end{align}
We will refer to this modification as optimization $\textsf{RG}$ (which is short
for ``reachability during generalization''). Only the second line is new.  Here,
$\overline{x}^*$, $\overline{i}^*$, and $\overline{c}^*$ are the previous-state
copies of $\overline{x}$, $\overline{i}$, and $\overline{c}$, respectively.
Originally, the formula was $\true$ if the region $\mathbf{x}_g \wedge G$
contained a state from which the protagonist could enforce to stay in $G$. In
this case, the generalization failed, because we cannot safely remove states
that are potentially winning for the protagonist. The new formula is $\true$
only if $\mathbf{x}_g \wedge G$ contains a state $\mathbf{x}_a$ from which the
protagonist can enforce to stay in $G$, and this state $\mathbf{x}_a$ is either
initial, or has a predecessor $\mathbf{x}_b$ in $G \wedge \neg \mathbf{x}_g$.
This situation is illustrated in Figure~\ref{fig:gen_reach}. States that are
neither initial nor have a predecessor in $G \wedge \neg \mathbf{x}_g$ are
unreachable and, hence, can safely be removed. Note that we require
$\mathbf{x}_b$ to be in $G \wedge \neg \mathbf{x}_g$, and not just in $G$ and
different from $\mathbf{x}_a$. The intuitive reason is that a predecessor in $G
\wedge \mathbf{x}_g$ does not count because this region is going to be removed
from $G$.  A more formal argument is given by the following theorem.
\begin{theorem}
For a realizable specification, if Eq.~\ref{eq:gen_reach} is unsatisfiable, then
$G \wedge \mathbf{x}_g$ cannot contain a state $\mathbf{x}_a$ from which (a) the
protagonist can enforce to visit $G$ in one step, and (b) which is reachable in
any implementation $\mathcal{I}$ derived from a winning region $W\Rightarrow G$
with $W\Rightarrow \FS{1}(W)$. \label{th:gen_reach1}
\end{theorem}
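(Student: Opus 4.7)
The plan is to argue the contrapositive: assume $G \wedge \mathbf{x}_g$ does contain some state $\mathbf{x}_a$ satisfying (a) and (b), and exhibit a satisfying assignment for Eq.~\ref{eq:gen_reach}. Let $\mathcal{I}$ and $W$ witness (b), so $W \Rightarrow G$, $W \Rightarrow \FS{1}(W)$, and some execution $\mathbf{x}_0, \mathbf{x}_1, \ldots, \mathbf{x}_k$ of $\mathcal{I}$ has $\mathbf{x}_k = \mathbf{x}_a$. Because the controller of $\mathcal{I}$ is extracted from $W$ and $W$ is inductive under $\FS{1}$ (with the standard base case $I \Rightarrow W$ making $W$ a proper winning region), every $\mathbf{x}_j$ along this execution lies in $W$, hence also in $G$.

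Next, I would let $j$ be the \emph{smallest} index in $\{0, \ldots, k\}$ with $\mathbf{x}_j \models \mathbf{x}_g$; this is well defined because $\mathbf{x}_a \models \mathbf{x}_g$. The state $\mathbf{x}_j$ will serve as the witness for the outer existential $\overline{x}$ of Eq.~\ref{eq:gen_reach}. The second line then demands $\mathbf{x}_g(\mathbf{x}_j) \wedge G(\mathbf{x}_j)$ together with $\forall \overline{i} \exists \overline{c}, \overline{x}' \scope T \wedge G'$; the first two conjuncts hold by the choice of $j$ and by $\mathbf{x}_j \in W \subseteq G$, while the quantified part is exactly $\mathbf{x}_j \models \FS{1}(G)$, which follows from $\mathbf{x}_j \in W$ together with $W \Rightarrow \FS{1}(W) \Rightarrow \FS{1}(G)$.

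For the first line I would split on $j$. If $j = 0$ then $\mathbf{x}_j \models I$ and the $I(\overline{x})$ disjunct is immediate (the starred variables can be assigned arbitrarily). Otherwise $j \geq 1$, and by minimality of $j$ we have $\mathbf{x}_{j-1} \models G \wedge \neg \mathbf{x}_g$; moreover the execution step of $\mathcal{I}$ from $\mathbf{x}_{j-1}$ to $\mathbf{x}_j$ supplies concrete inputs $\hat{\mathbf{i}}, \hat{\mathbf{c}}$ with $T(\mathbf{x}_{j-1}, \hat{\mathbf{i}}, \hat{\mathbf{c}}, \mathbf{x}_j)$, so assigning $\overline{x}^* := \mathbf{x}_{j-1}$, $\overline{i}^* := \hat{\mathbf{i}}$, $\overline{c}^* := \hat{\mathbf{c}}$ satisfies the second disjunct of the first line. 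Combining both lines yields a full satisfying assignment for Eq.~\ref{eq:gen_reach}, contradicting unsatisfiability.

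The main obstacle is the asymmetry built into Eq.~\ref{eq:gen_reach}: the predecessor must lie in $G \wedge \neg \mathbf{x}_g$, not merely in $G$. A predecessor already inside $\mathbf{x}_g$ would be useless as reachability evidence, since $\mathbf{x}_g$ is precisely the region we are about to remove from $G$. Picking $j$ as the \emph{first} index on the execution that enters $\mathbf{x}_g$ is exactly what forces $\mathbf{x}_{j-1}$ out of $\mathbf{x}_g$ and makes the second disjunct of the first line applicable; I expect this indexing choice, together with the careful appeal to the inductiveness of $W$ to keep the execution inside $G$, to be the subtlest part of the proof.
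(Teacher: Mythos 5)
Your proposal is correct and follows essentially the same argument as the paper's proof: take an execution of $\mathcal{I}$ that stays in $W\subseteq G$, locate the point where it enters $\mathbf{x}_g$ so that its predecessor lies in $G\wedge\neg\mathbf{x}_g$, and use $W\Rightarrow \FS{1}(W)\Rightarrow\FS{1}(G)$ to produce a satisfying assignment of Eq.~\ref{eq:gen_reach}. The only (immaterial) difference is that you pick the first entry into $\mathbf{x}_g$ while the paper picks the start of the last contiguous block inside $G\wedge\mathbf{x}_g$ before $\mathbf{x}_a$, and you phrase it as a contrapositive rather than a contradiction; both choices yield a valid witness.
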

A proof can be found in Appendix~\ref{sec:appendix_proof}.
Theorem~\ref{th:gen_reach1} ensures that the states removed with optimization
\textsf{RG} cannot be necessary for the protagonist to win the game, i.e., that
the optimization does not remove ``too much''.  So far, we assumed
realizability. However, optimization \textsf{RG} also cannot make an
unrealizable specification be identified as realizable.  It can only remove more
states, which means that unrealizability is detected only earlier.

Similar to improving the generalization of counterexamples using unreachability
information, we can also restrict their computation to potentially reachable
states.  This is explained as optimization \textsf{RC} in
Appendix~\ref{sec:appendix_rc}.  However, while optimization \textsf{RG}
resulted in significant performance gains (more than an order of magnitude for
some benchmarks; see the columns SM and SGM in Table~\ref{table:results2}), we
could not achieve solid improvements with optimization \textsf{RC}.  Sometimes
the computation became slightly faster, sometimes slower.

\begin{figure}[tb]
\centering
\begin{minipage}{.30\textwidth}
   \centering
   \includegraphics[width=\textwidth]{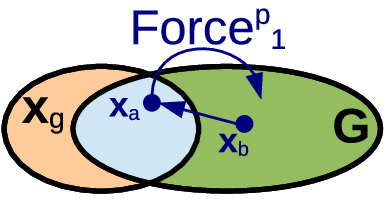}
   \caption{Optimization \textsf{RG}: A counterexample to generalization.}
   \label{fig:gen_reach}
\end{minipage}%
\hspace{0.5cm}
\begin{minipage}{.65\textwidth}
 \centering
 \includegraphics[width=\textwidth]{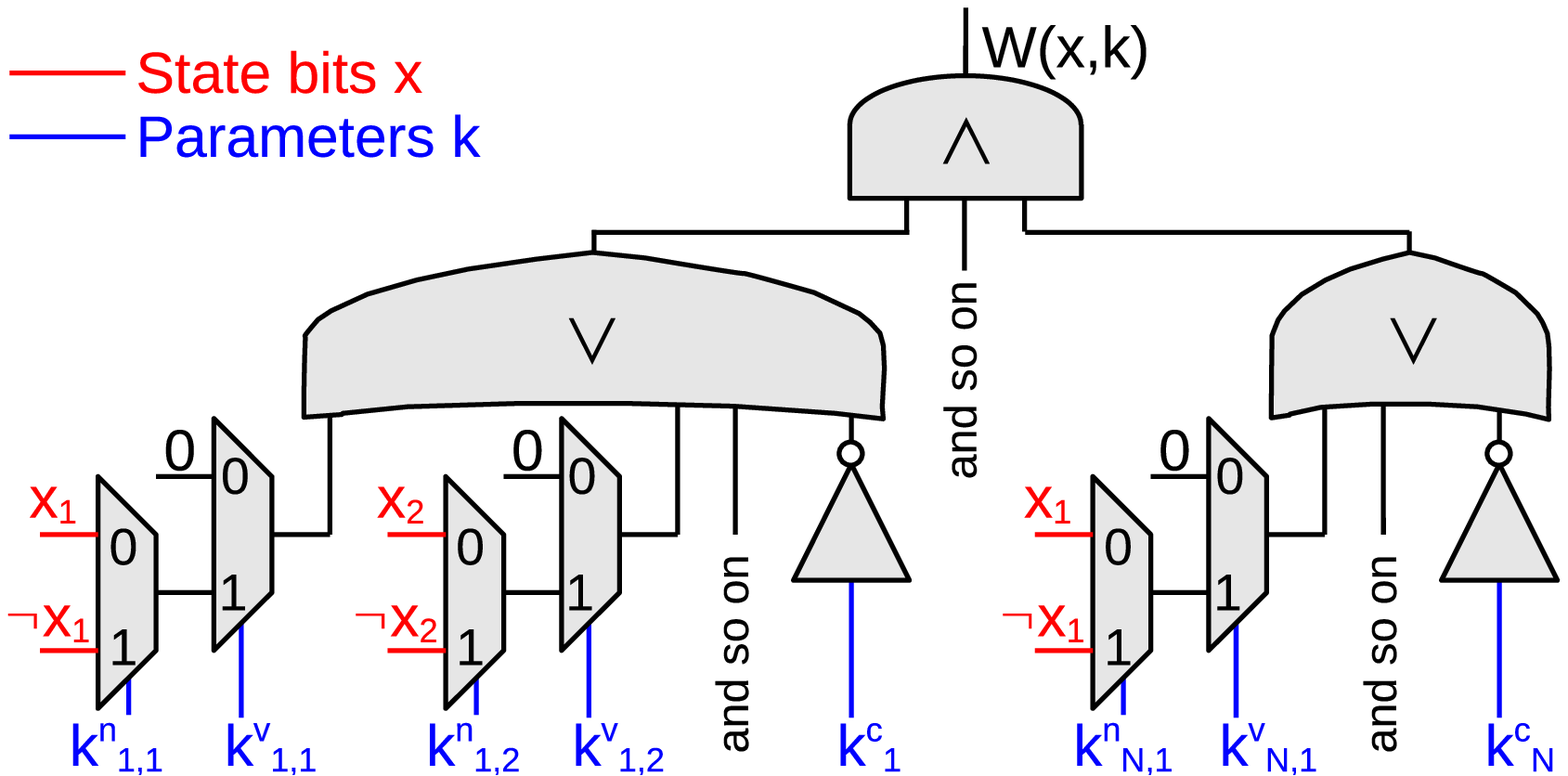}
 \caption{A CNF template for the winning region.}
 \label{fig:cnf_template}
\end{minipage}%
\end{figure}

\subsection{Parallelization}
\label{sec:parallel}
The algorithms \textsc{LearnQbf} and \textsc{LearnSat} compute clauses that
refine the current over-approximation $F$ of the winning region.  This can also
be done with multiple threads in parallel using a global clause database $F$.
Different threads can implement different methods to compute new clauses, or
generalize existing ones.  They notify each other whenever they add a (new or
smaller) clause to $F$ so that all other threads can continue to work with the
refined $F$.

In our implementation, we experimented with different thread combinations. If
two threads are available, we let them both execute \textsc{LearnSat} with
optimization \textsf{RG} but without \textsf{RC}.  We keep the
\textsc{LearnSat}-threads synchronized in the sense that they all use the same
$\hat{F}$.  If one thread restarts solver $s_\exists$ with a new $\hat{F}$, then
all other \textsc{LearnSat}-threads restart their $s_\exists$-solver with the
same $\hat{F}$ as well.  This way, the \textsc{LearnSat}-threads can not only
exchange new $F$-clauses, but also new $U$-clauses.  We use different
SAT-solvers in the different threads (currently our implementation supports
\lingeling, \minisat, and \picosat).  This reduces the chances that the threads
find the same (or similar) counterexamples and generalizations.  Also, the
solvers may complement each other: if one gets stuck for a while on a hard
problem, the other one may still achieve significant progress in the meantime.
The stuck solver then benefits from this progress in the next step.  We also let
the \textsc{LearnSat}-threads store the computed counterexample-cubes in a
global counterexample-database.  If three threads are available, we use one
thread to take counterexample-cubes from this database, and compute all possible
generalizations using a SAT-solver and a hitting set tree
algorithm~\cite{Reiter87}.  We also experimentally added threads that minimize
existing clauses further using a QBF-solver, and threads implementing
\textsc{LearnQbf}.  However, we observed that threads using QBF-solvers can not
quite keep up with the pace of threads using SAT-solvers. Consequently, they
only yield minor speedups.

Our parallelization approach does not only exploit hardware parallelism, it is
also a playground for combining different methods and solvers.  We only tried a
few options; a thorough investigation of beneficial combinations remains to be
done.

\section{Direct Synthesis Methods}
\label{sec:direct}

This section presents completely different approaches for computing a winning
region.  Instead of refining an initial guess in many iterations, we simply
assert the constraints for a proper winning region and compute a solution in one
go.

\subsection{Template-Based Synthesis Approach}

We define a generic template $W(\overline{x},\overline{k})$ for the winning
region $W(\overline{x})$, where $\overline{k}$ is a vector of Boolean variables
acting as template parameters. Concrete values $\mathbf{k}$ for the parameters
$\overline{k}$ instantiate a concrete formula $W(\overline{x})$ over the state
variables $\overline{x}$. This reduces the search for a Boolean formula (the
winning region) to a search for Boolean parameter values. We can now find a
winning region that satisfies the three desired properties (I)-(III) with a
single QBF-solver call:
\begin{eqnarray}
(sat, \mathbf{k}) = \qbfsatmodel(
\exists \overline{k} \scope
\forall \overline{x}, \overline{i} \scope
\exists \overline{c}, \overline{x}' \scope
&& (I \Rightarrow W(\overline{x}, \overline{k})) \wedge \notag\\
&& (W(\overline{x}, \overline{k}) \Rightarrow P) \wedge \label{eq:templ}\\
&& (W(\overline{x}, \overline{k}) \Rightarrow \notag
(T\wedge W(\overline{x}', \overline{k})))\quad \quad
\end{eqnarray}
The challenge in this approach is to define a generic template
$W(\overline{x},\overline{k})$ for the winning region.
Figure~\ref{fig:cnf_template} illustrates how a CNF template could look like.
Here, $W(\overline{x})$ is a conjunction of clauses over the state variables
$\overline{x}$.  Template parameters $\overline{k}$ define the shape of the
clauses.  First, we fix a maximum number $N$ of clauses in the CNF. Then, we
introduce three vectors of template parameters: $\overline{k^c}$,
$\overline{k^v}$, and $\overline{k^n}$.  We denote their union by
$\overline{k}$. If parameter $k^c_i$ with $1\leq i\leq N$ is $\true$, then
clause $i$ is used in $W(\overline{x})$, otherwise not.  If parameter
$k^v_{i,j}$ with $1\leq i\leq N$ and $1\leq j\leq |\overline{x}|$ is $\true$,
then the state variable $x_j\in\overline{x}$ appears in clause $i$ of
$W(\overline{x})$, otherwise not. Finally, if parameter $k^n_{i,j}$ is $\true$,
then $x_j$ can appear in clause $i$ only negated, otherwise only unnegated. If
$k^v_{i,j}$ is $\false$, then $k^n_{i,j}$ is irrelevant.  This gives
$|\overline{k}|=2\cdot N \cdot |\overline{x}| + N$ template parameters.
Figure~\ref{fig:cnf_template} illustrates this definition of
$W(\overline{x},\overline{k})$ as a circuit.  A CNF encoding of this circuit to
be used in the QBF query shown in Eq.~\ref{eq:templ} is straightforward.
Choosing $N$ is delicate.  If $N$ is too low, we will not find a solution, even
if one exists.  If it is too high, we waste computational resources and may
find an unnecessarily complex winning region. In our implementation, we solve
this dilemma by starting with $N=1$ and doubling it upon failure.  We stop if we
get a negative answer for $N \ge 2^{|\overline{x}|}$ (because any Boolean
formula over $\overline{x}$ can be represented in a CNF with $<
2^{|\overline{x}|}$ clauses). The CNF template explained in this paragraph is
just an example.  Other ideas include And-Inverter Graphs with parameterized
interconnects, or other parameterized circuits~\cite{KojevnikovKY09}.

The template-based approach can be good at finding simple winning regions
quickly.  There may be many different winning regions that satisfy the
conditions (I)-(III).  The algorithms \textsc{SafeSynth}, \textsc{LearnQbf} and
\textsc{LearnSat} will always find the largest of these sets (modulo unreachable
states, if used with optimization \textsf{RG} or \textsf{RC}). The
template-based approach is more flexible.  As an extreme example, suppose that
there is only one initial state, it is safe, and the protagonist can enforce to
stay in this state. Suppose further that the largest winning region is
complicated.  The template-based approach may find $W=I$ quickly, while the
other approaches may take ages to compute the largest winning region.  On the
other hand, the template-based approach can be expected to scale poorly if no
simple winning region exists, or if the synthesis problem is even unrealizable.
The issue of detecting unrealizability can be tackled just like in bounded
synthesis~\cite{FiliotJR09}: in parallel to searching for a winning region for
the protagonist, one can also try to find a winning region for the antagonist (a
set of states from which the antagonist can enforce to leave the safe states in
some number of steps).  If a winning region for the antagonist contains an
initial state, unrealizability is detected.

\subsection{EPR Reduction Approach}
\label{sec:epr}
The EPR approach is based on the observation that a winning region
$W(\overline{x})$ satisfying the three requirements (I)-(III) can also be
computed as a Skolem function, without a need to fix a template. However, the
requirement (III) concerns not only $W$ but also its next-state copy $W'$.
Hence, we need a Skolem function for the winning region and its next-state copy,
and the two functions must be consistent. This cannot be formulated as a QBF
problem with a linear quantifier structure, but only using so-called Henkin
Quantifiers\footnote{A winning region is a Skolem function for the Boolean
variable $w$ in the formula
$
\begin{array}{l}
\forall \overline{x} \scope
\exists w \scope
\forall \overline{i} \scope
\exists \overline{c} \scope\\
\forall \overline{x}' \scope
\exists w' \scope
\end{array}
(I \Rightarrow w) \wedge
(w \Rightarrow P) \wedge
((\overline{x} = \overline{x}') \Rightarrow (w = w')) \wedge
(w\wedge T
\Rightarrow w')
$
.}~\cite{FroehlichKB12}, or in the \emph{Effectively Propositional Logic
(EPR)}~\cite{Lewis80} fragment of first-order logic.  Deciding the
satisfiability of formulas with Henkin Quantifiers is NEXPTIME-complete, and
only a few tools exist to tackle the problem~\cite{FroehlichKB12}.  Hence, we
focus on reductions to EPR.
EPR is a subset of first-order logic that contains formulas of the form $\exists
\overline{A}\scope \forall \overline{B} \scope \varphi$, where $A$ and $B$ are
disjoint vectors of variables ranging over some domain $\mathbb{D}$, and
$\varphi$ is a function-free first-order formula in CNF. The formula $\varphi$
can contain predicates, which are (implicitly) existentially quantified.

Recall that we need to find a formula $W(\overline{x})$ such that
$
\forall \overline{x}, \overline{i} \scope
\exists \overline{c}, \overline{x}' \scope
(I\Rightarrow W) \wedge (W\Rightarrow P) \wedge (W\Rightarrow T\wedge W')
$.
In order to get a corresponding EPR formula, we must (a) encode the Boolean
variables using first-order domain variables, (b) eliminate the existential
quantification inside the universal one, and (c) encode the body of the formula
in CNF.  Just like~\cite{SeidlLB12}, we can address (a) by introducing a new
domain variable $Y$ for every Boolean variable $y$, a unary predicate $p$ to
encode the truth value of variables, constants $\top$ and $\bot$ to encode
$\true$ and $\false$, and the axioms $p(\top)$ and $\neg p(\bot)$.  The
existential quantification of the $\overline{x}'$ variables can be turned into a
universal one by turning the conjunction with $T$ into an implication, i.e.,
re-write
$\forall \overline{x}, \overline{i} \scope
\exists \overline{c}, \overline{x}' \scope
W(\overline{x}) \Rightarrow T(\overline{x}, \overline{i}, \overline{c},
\overline{x}') \wedge W(\overline{x}')$
to
$\forall \overline{x}, \overline{i} \scope
\exists \overline{c} \scope \forall \overline{x}' \scope
W(\overline{x}) \wedge T(\overline{x}, \overline{i}, \overline{c},
\overline{x}') \Rightarrow W(\overline{x}')$.
This works because we assume that $T$ is both deterministic and complete.  We
Skolemize the $\overline{c}$-variables $c_1,\ldots,c_n$ by introducing new
predicates $C_1(\overline{X}, \overline{I}),\ldots, C_n(\overline{X},
\overline{I})$.  For $W$, we also introduce a new predicate $W(\overline{X})$.
This gives
\begin{eqnarray*}
\forall \overline{X}, \overline{I}, \overline{X}' \scope &&
(I(\overline{X}) \Rightarrow W(\overline{X})) \quad \wedge \quad
(W(\overline{X})\Rightarrow P(\overline{X})) \quad \wedge \quad
\\&&
(W(X) \wedge T(\overline{X}, \overline{I}, \overline{C}(\overline{X},
\overline{I}), \overline{X}') \Rightarrow W(X'))
\end{eqnarray*}
The body of this formula has to be encoded in CNF, but many first-order theorem
provers and EPR solvers can do this internally.  If temporary variables are
introduced in the course of a CNF encoding, then they have to be Skolemized with
corresponding predicates.  Instantiation-based EPR-solvers like
\iprover~\cite{Korovin08} can not only decide the satisfiability of EPR
formulas, but also compute models in form of concrete formulas for the
predicates.  For our problem, this means that we cannot only directly extract a
winning region but also implementations for the control signals from the
$C_j(\overline{X}, \overline{I})$-predicates.  \iprover\ also won the EPR track
of the Automated Theorem Proving System Competition in the last years.

\section{Experimental Results}
\label{sec:exp_res}

This section presents our implementation, benchmarks and experimental results.

\subsection{Implementation}
\label{sec:impl}

We implemented the synthesis methods presented in this paper in a prototype
tool.  The source code (written in C++), more extensive experimental results,
and the scripts to reproduce them are available for
download\footnote{\url{www.iaik.tugraz.at/content/research/design_verification/demiurge/}.}.
Our tool takes as input an \aiger\footnote{See \url{http://fmv.jku.at/aiger/}.}
file, defined as for the safety track of the hardware synthesis competition, but
with the inputs separated into controllable and uncontrollable ones.  It outputs
the synthesized implementation in \aiger format as well.  Several back-ends
implement different methods to compute a winning region.  At the moment, they
all use \qbfcert~\cite{NiemetzPLSB12} to extract the final implementation.
However, in this paper, we evaluate the winning region computation only.
Table~\ref{tab:impl} describes some of our implementations.  Results for more
configurations (with different optimizations, solvers, etc.) can be found in the
downloadable archive.
\begin{table}[tb]
\centering
\caption{Overview of our Implementations}
\label{tab:impl}
\setlength{\tabcolsep}{4.3pt}
\begin{tabular}{l||c|c|l}
Name  & Techn.      & Solver              & Description\\
\hline
BDD   & BDDs       & \cudd                & \textsc{SafeSynth} (Sect.~\ref{sec:std})                         \\
PDM   & SAT        & \minisat             & Property directed method~\cite{MorgensternGS13}                  \\
QAGB  & QBF        & \bloqqerM\ + \depqbf  & \textsc{LearnQbf} + opt. \textsf{RG} + comp. of all  \\
      &            &                      & counterexample generalizations (Sect.~\ref{sec:qbf_learn})       \\
SM    & SAT        & \minisat             & \textsc{LearnSat} (Sect.~\ref{sec:sat_learn})                    \\
SGM   & SAT        & \minisat             & Like SM but with optimization \textsf{RG}                        \\
P$i$  & SAT        & various              & Multi-threaded with $i$ threads (Sect.~\ref{sec:parallel})       \\
TB    & QBF        & \bloqqerM\ + \depqbf  & CNF-template-based (Sect.~\ref{sec:std})                         \\
EPR   & EPR        & \iprover             & EPR-based (Sect.~\ref{sec:epr})
\end{tabular}
\end{table}
The BDD-based method is actually implemented in a separate tool\footnote{Is was
created by students and won a competition in a lecture on synthesis.}.  It uses
dynamic variable reordering, forced re-orderings at certain points, and a cache
to speedup the construction of the transition relation.  PDM is a
re-implementation of~\cite{MorgensternGS13}.  These two implementations serve as
baseline for our comparison.  The other methods are implemented as described
above. \bloqqerM\ refers to an extension of the QBF-preprocessor \bloqqer\ to
preserve satisfying assignments.  This extension is presented
in~\cite{SeidlK14}.


\subsection{Benchmarks}
\label{sec:bench}

We evaluate the methods on several parametrized specifications. The first one
defines an arbiter for ARM's AMBA AHB bus~\cite{BloemGJPPW07b}.  It is
parametrized with the number of masters it can handle.  These specifications are
denoted as \texttt{amba}$ij$, where $i$ is the number of masters, and
$j\in\{\texttt{c},\texttt{b}\}$ indicates how the fairness properties in the
original formulation of the specification were transformed into safety
properties (see Appendix \ref{sec:appendix_bench} for details).  The second
specification is denoted by \texttt{genbuf}$ij$, with
$j\in\{\texttt{c},\texttt{b}\}$, and defines a generalized
buffer~\cite{BloemGJPPW07b} connecting $i$ senders to two receivers.  Also here,
liveness properties have been reduced to safety properties.  Both of these
specifications can be considered as ``control-intensive'', i.e., contain
complicated constraints on few signals.  In contrast to that, the following
specifications are more ``data-intensive'', and do not contain transformed
liveness properties.  The specification \texttt{add}$io$ with
$o\in\{\texttt{y},\texttt{n}\}$ denotes a combinational $i$-bit adder.  Here
$o$=$\texttt{y}$ indicates that the \aiger file was optimized with
\Abc~\cite{BraytonM10}, and $o$=$\texttt{n}$ means that this optimization was
skipped. Next, \texttt{mult}$i$ denotes a combinational $i$-bit multiplier. The
benchmark \texttt{cnt}$io$ denotes an $i$-bit counter that must not reach its
maximum value, which can be prevented by setting the control signals correctly
at some other counter value. Finally, \texttt{bs}$io$ denotes an $i$-bit barrel
shifter that is controlled by some signals.  The tables \ref{table:results1} and
\ref{table:results2} in Appendix \ref{sec:time_tables} list the size of these
benchmarks.

\subsection{Results}
\label{sec:res}

Figure~\ref{fig:cactus} summarizes the performance results of our synthesis
methods on the different parameterized specifications with cactus plots.  The
vertical axis shows the execution time for computing a winning region using a
logarithmic scale.
\begin{figure}[bt]
\centering
   \subfigure[Results for \texttt{amba}\label{fig:cactus_amba}]
             {\includegraphics[width=0.49\textwidth]{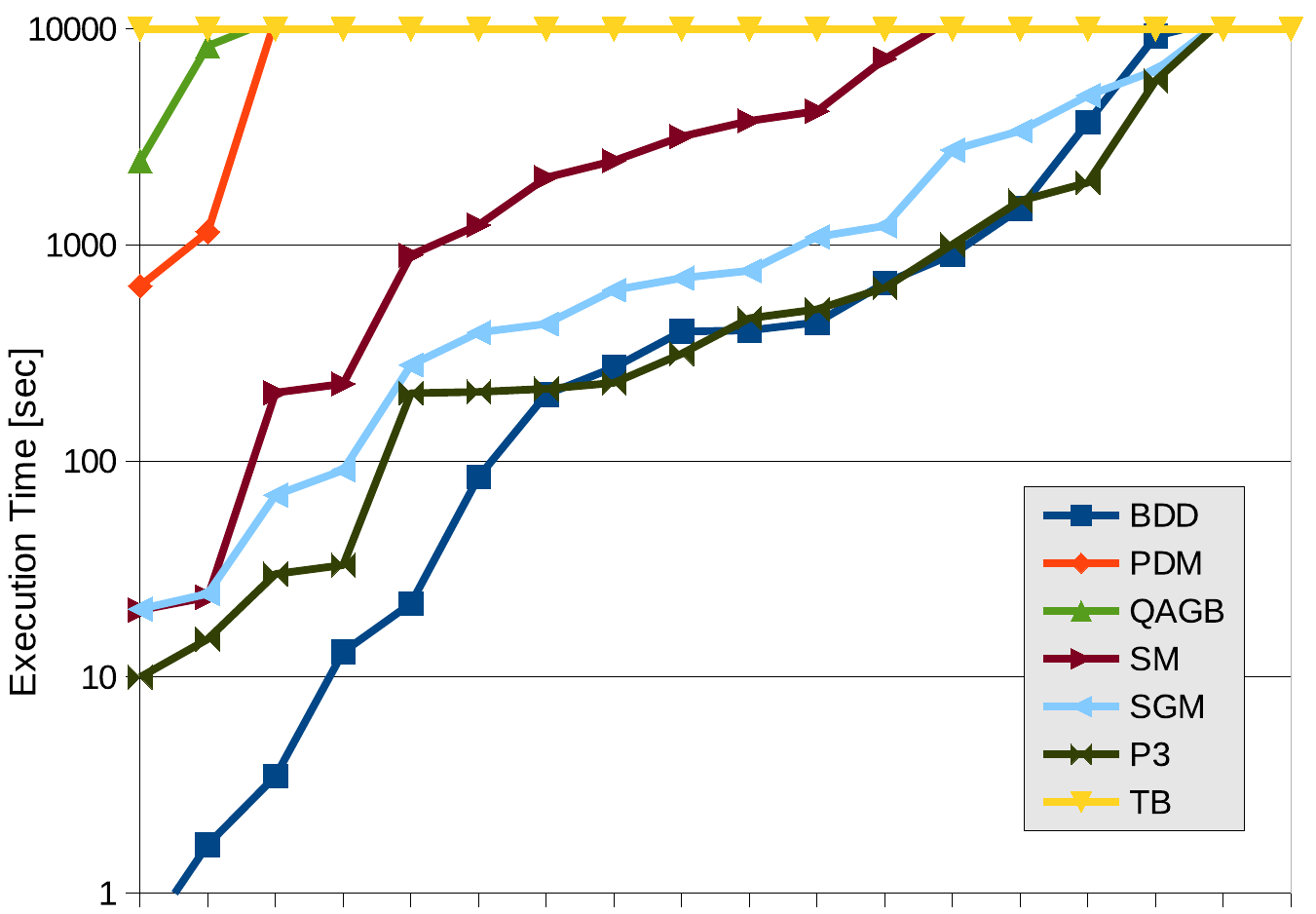}}
   \subfigure[Results for \texttt{genbuf}\label{fig:cactus_genbuf}]
             {\includegraphics[width=0.49\textwidth]{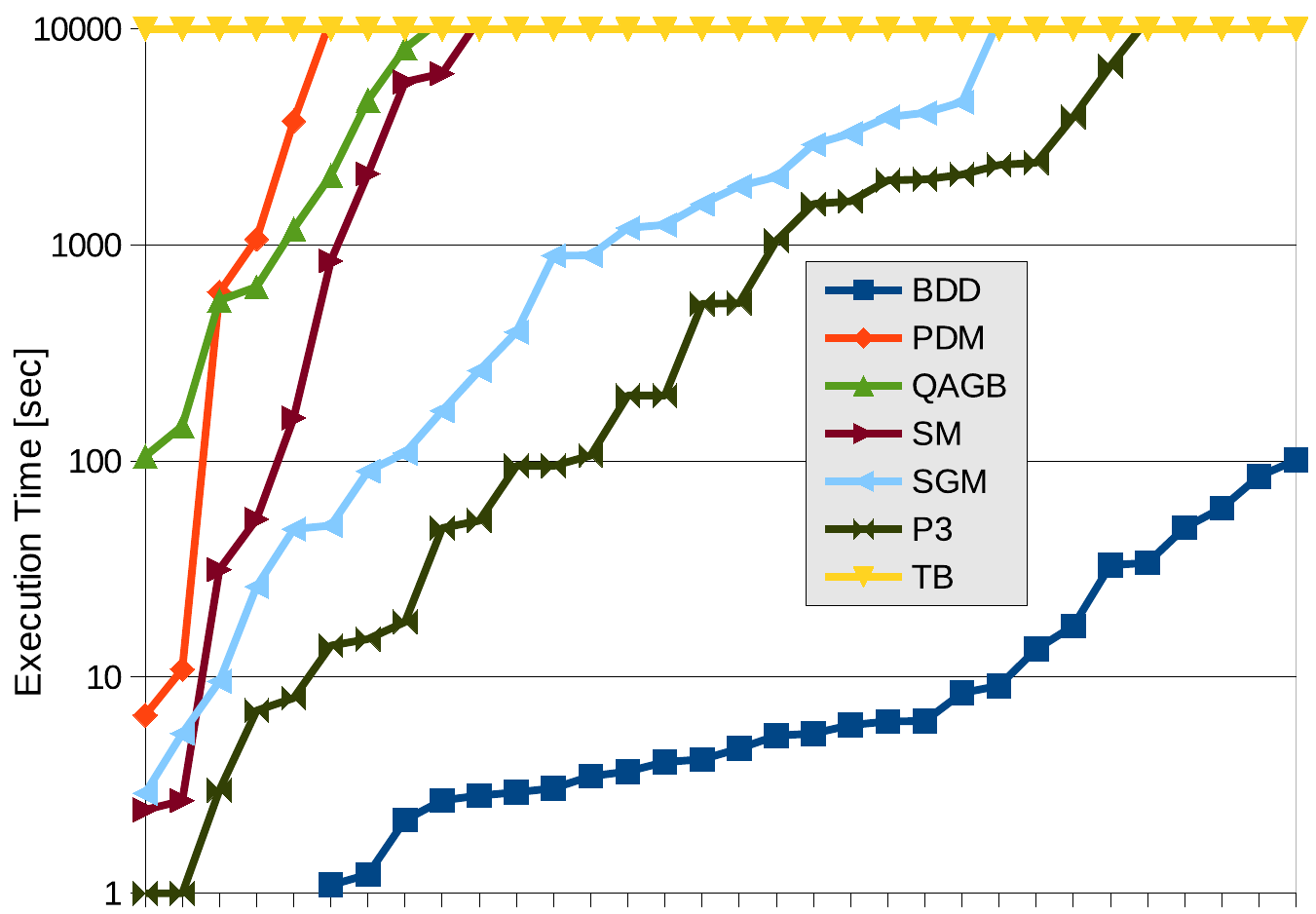}}
   \subfigure[Results for \texttt{add}\label{fig:cactus_add}]
             {\includegraphics[width=0.49\textwidth]{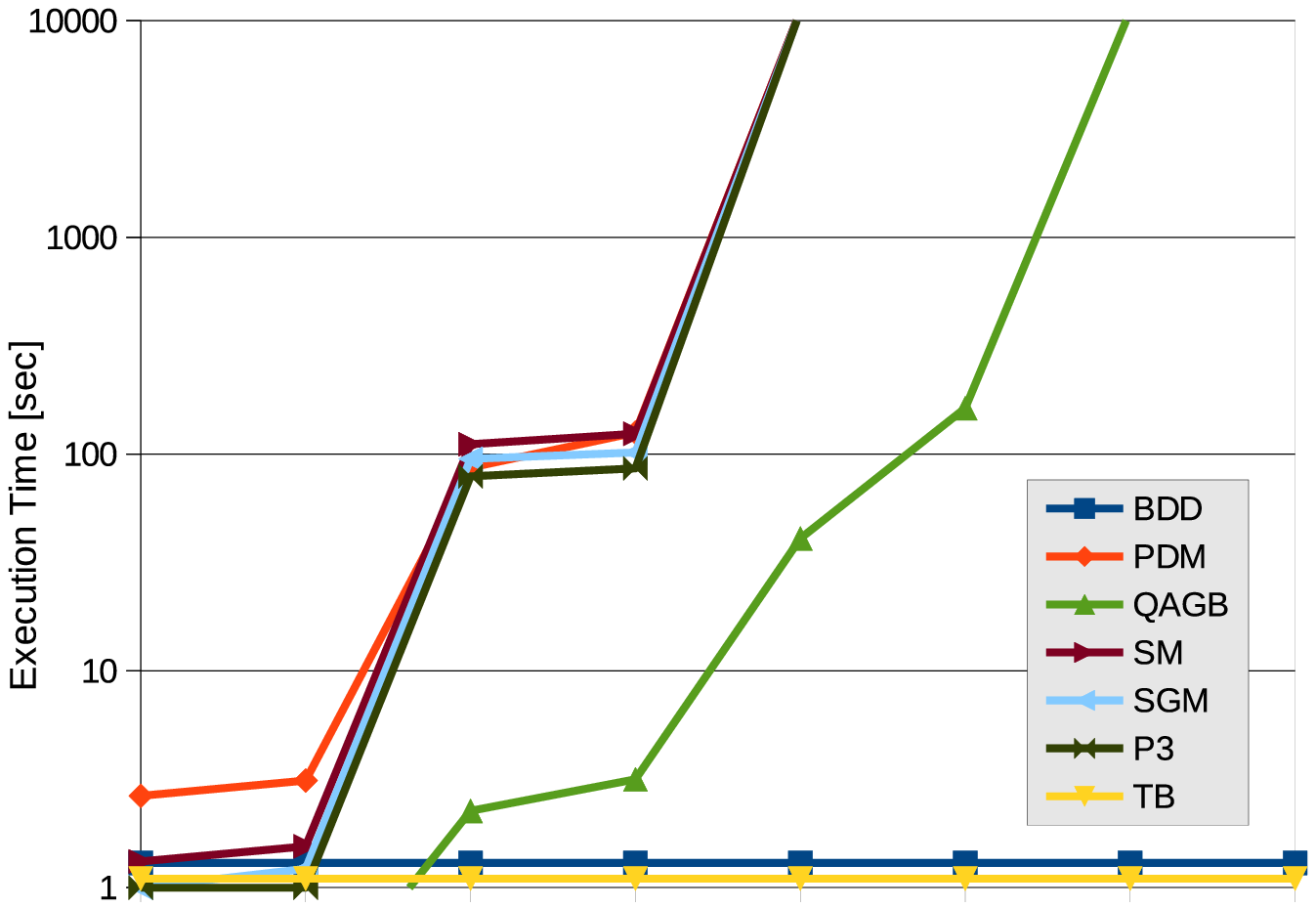}}
   \subfigure[Results for \texttt{mult}\label{fig:cactus_mult}]
             {\includegraphics[width=0.49\textwidth]{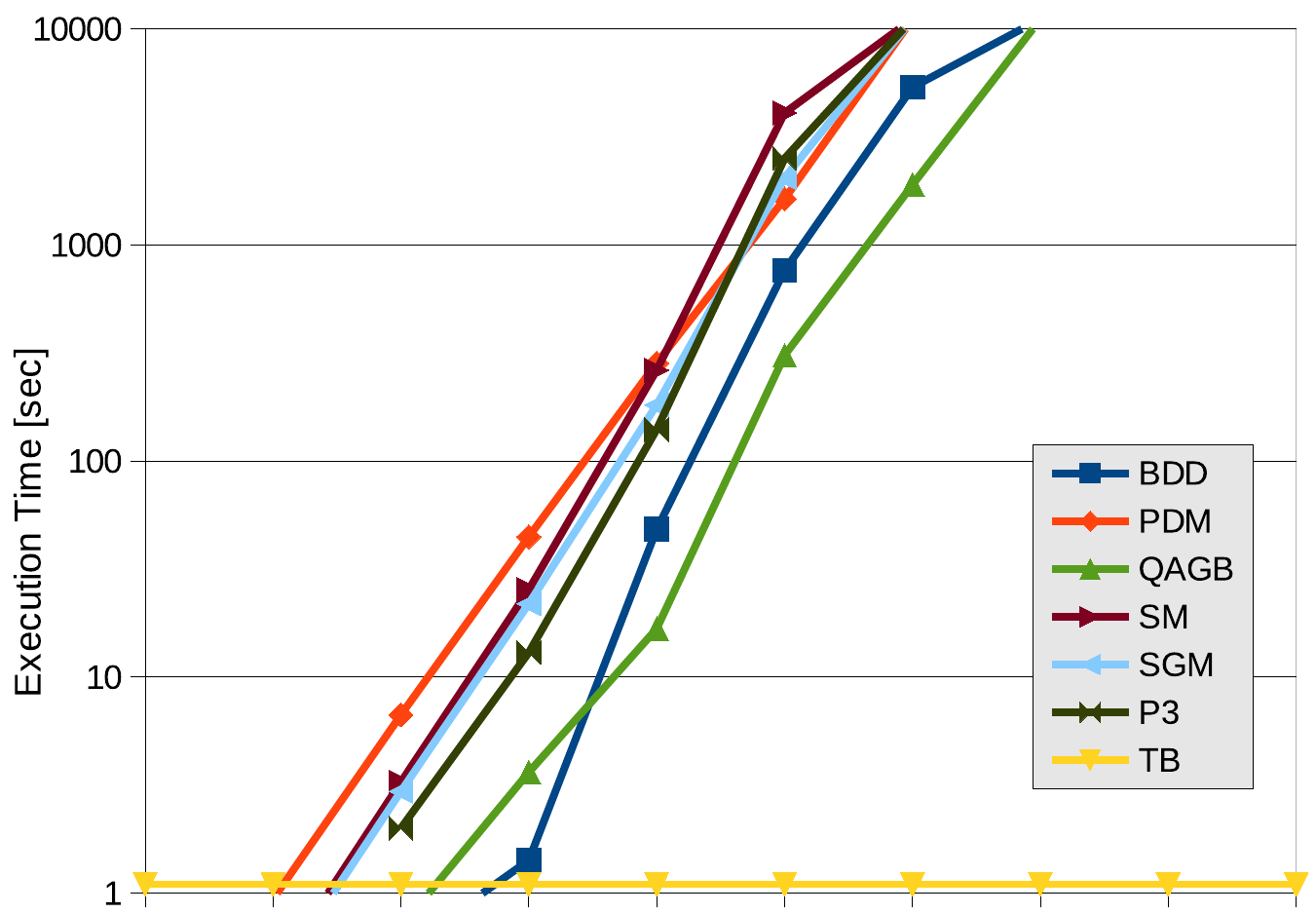}}
   \subfigure[Results for \texttt{cnt}\label{fig:cactus_cnt}]
             {\includegraphics[width=0.49\textwidth]{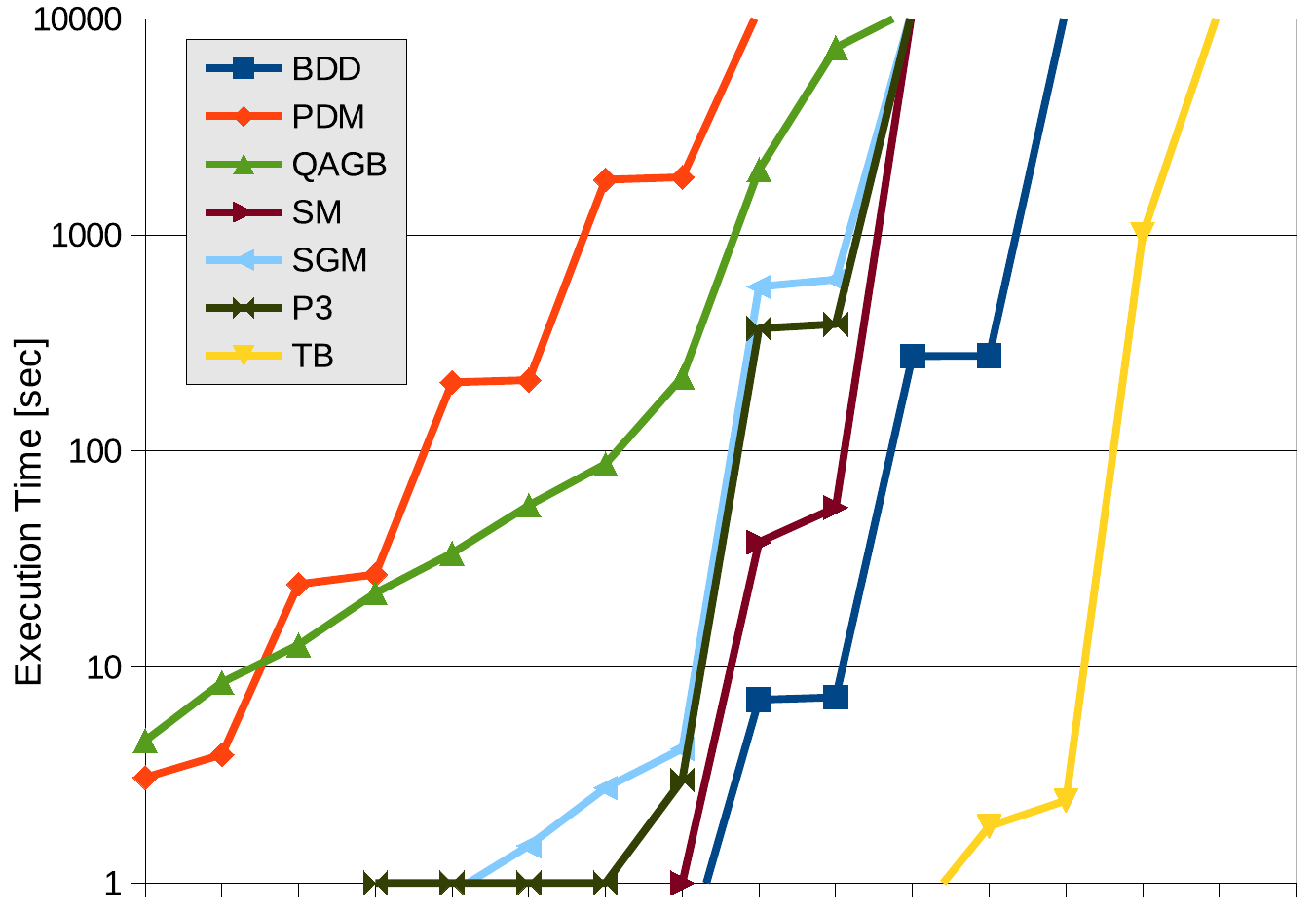}}
   \subfigure[Results for \texttt{bs}\label{fig:cactus_bs}]
             {\includegraphics[width=0.49\textwidth]{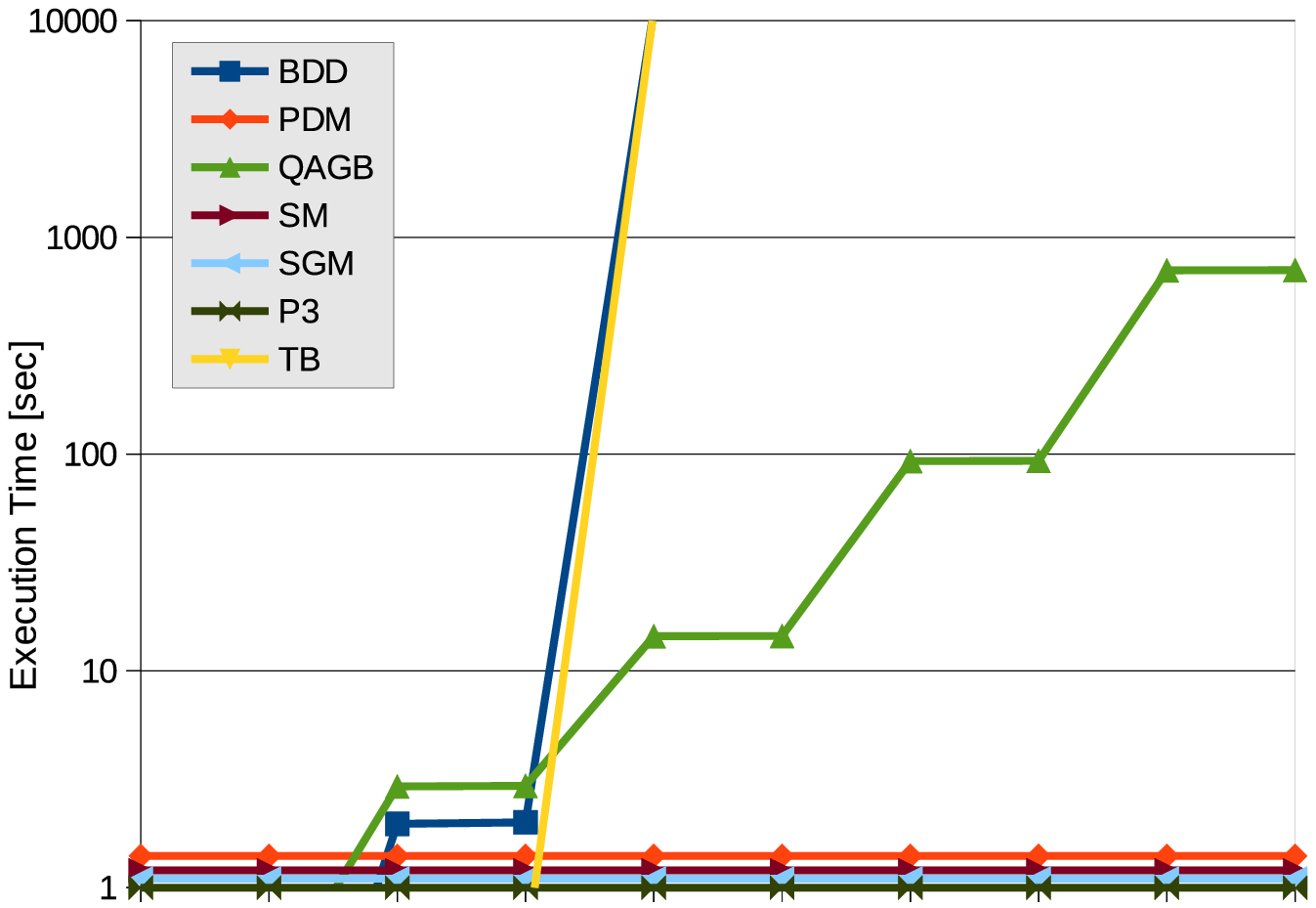}}
   \caption{Cactus plots summarizing our performance evaluation.}
  \label{fig:cactus}
\end{figure}
The horizontal axis gives the number of benchmark instances that can be solved
within this time limit (per instance).  Roughly speaking this means that the
steeper a line rises, the worse is the scalability of this method.  In order to
make the charts more legible, we sometimes ``zoomed'' in on the interesting
parts.  That is, in some charts we omitted the leftmost part were all methods
terminate within fractions of a second, as well as the rightmost part where
(almost) all
methods timeout.  We set a timeout of $10\,000$ seconds, and a
memory limit of $4$\,GB. The memory limit was only exceeded by the EPR approach.
The EPR approach did so for quite small instances already, so we did not include
it in Figure~\ref{fig:cactus}. The detailed execution times can be found in the
tables \ref{table:results1} and \ref{table:results2} of Appendix
\ref{sec:time_tables}. All experiments were performed on an Intel Xeon E5430 CPU
with 4 cores running at $2.66$\,GHz, and a 64 bit Linux.
Figure~\ref{fig:speedup} illustrates the speedup achieved by our parallelization
(see Section~\ref{sec:parallel}) on the \texttt{amba} and \texttt{genbuf}
benchmarks in a scatter plot.  The x-axis carries the computation time with one
thread.  The y-axis shows the corresponding execution time with two and three
threads.  Note that the scale on both axes is logarithmic.

\subsection{Discussion}
\label{sec:disc}

\begin{wrapfigure}[16]{r}{0.5\textwidth}
\vspace{-0.9cm}
 \centering
 \includegraphics[width=0.5\textwidth]{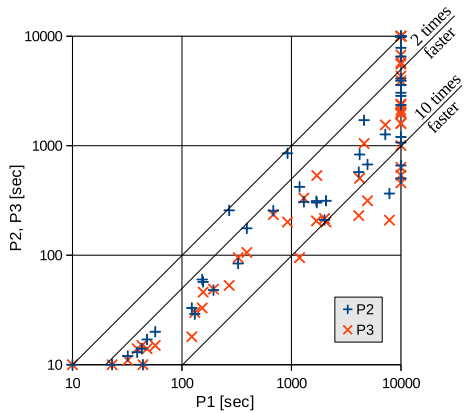}
 \caption{Parallelization speedup.}
 \label{fig:speedup}
\end{wrapfigure}
Figure~\ref{fig:speedup} illustrates a parallelization speedup mostly between a
factor of $2$ and $37$, with a tendency to greater improvements for larger
benchmarks.  Only part of the speedup is due to the exploitation of hardware
parallelism.  Most of the speedup actually stems from the fact that the threads
in our parallelization execute different methods and use different solvers that
complement each other.  Even if executed on a single CPU core in a
pseudo-parallel manner, a significant speedup can be observed.  In our
parallelization, we experimented with only a few combinations of solvers and
algorithms.  We think that there is still a lot of room for improvements,
requiring a more extensive investigation of beneficial algorithm and solver
combinations.

For the \texttt{amba} benchmarks, our parallelization P3 slightly outperforms
BDDs (Figure~\ref{fig:cactus_amba}). For \texttt{genbuf}, BDDs are significantly
faster (Figure~\ref{fig:cactus_genbuf}). The template-based approach does not
scale at all for these benchmarks.  The reason is that, most likely, no simple
CNF representation of a winning region exists for these benchmarks.  For
instance, for the smallest \texttt{genbuf} instance, P3 computes a winning
region as a CNF formula with $124$ clauses and $995$ literal occurrences. By
dropping literals and clauses as long as this does not change the shape of the
winning region, we can simplify this CNF to $111$ clauses and $849$ literal
occurrences.  These numbers indicates that no winning region for these benchmarks
can be described with only a few clauses.  Instantiating a CNF template with
more than $100$ clauses is far beyond the capabilities of the solver, because
the number of template parameters grows so large (e.g., $4300$ template
parameters for the smallest \texttt{genbuf} instance with a template of $100$
clauses for the winning region).  The situation is different for \texttt{add}
and \texttt{mult}. These designs are mostly combinational (with a few states to
track if an error occurred).  A simple CNF-representation of the winning region
(with no more than 2 clauses) exists, and the template-based approach finds it
quickly (Figure~\ref{fig:cactus_add} and~\ref{fig:cactus_mult}).

In Figure~\ref{fig:cactus_genbuf}, we observe a great improvement due to the
reachability optimization \textsf{RG} (SM vs.~SGM). In some plots, this
improvement is not so significant, but optimization \textsf{RG} never slows down
the computation significantly.  Similar observations can be made for QAGB (but
this is not shown in the plots to keep them simple).

The SAT-based back-end SGM outperforms the QBF-based back-end QAGB on most
benchmark classes (all except for \texttt{add} and \texttt{mult}).  It has
already been observed before that solving QBF-problems with plain SAT-solvers
can be beneficial~\cite{JanotaS11,MorgensternGS13}. Our experiments confirm
these observations. One possible reason is that SAT-solvers can be used
incrementally, and they can compute unsatisfiable cores. These features are
missing in modern QBF-solvers. However, this situation may change in the future.

The barrel shifters \texttt{bs} are intractable for BDDs, even for rather small
sizes.  Already when building the BDD for the transition relation, the approach
times out because of many and long reordering phases, or runs out of memory if
reordering is disabled. In contrast, almost all our SAT- and QBF-based
approaches are done within fractions of a second on these examples.  We can
consider the \texttt{bs}-benchmark as an example of a design with complex
data-path elements.  BDDs often fail to represent such elements efficiently. In
contrast, the SAT- and QBF-based methods can represent them easily in CNF. At
the same time, the SAT- and QBF-solvers seem to be smart enough to consider the
complex data-path elements only as far as they are relevant for the synthesis
problem.

On most of the benchmarks, especially \texttt{amba} and \texttt{genbuf}, our new
synthesis methods outperform our re-implementation of~\cite{MorgensternGS13}
(PDM in Figure~\ref{fig:cactus}) by orders of magnitude. Yet,
\cite{MorgensternGS13} reports impressive results for these benchmarks: the
synthesis time is below 10 seconds even for \texttt{amba}$16$ and
\texttt{genbuf}$16$.  We believe that this is due to a different formulation of
the benchmarks.  We translated the benchmarks, exactly as used in
\cite{MorgensternGS13}, into our input language manually, at least for
\texttt{amba}$16$ and \texttt{genbuf}$16$.  Our PDM back-end, as well as most of
the other back-ends, solve them in a second.  This suggests that the enormous
runtime differences stem from differences in the benchmarks, and not in the
implementation. An investigation of the exact differences in the benchmarks
remains to be done.

In summary, none of the approaches is consistently superior.  Instead, the
different benchmark classes favor different methods.  BDDs perform well on many
benchmarks, but are outperformed by our new methods on some classes. The
template-based approach and the parallelization of the SAT-based approach seem
particularly promising.  The reduction to EPR turned out to scale poorly.

\section{Summary and Conclusion}
\label{sec:concl}

In this paper, we presented various novel SAT- and QBF-based methods to
synthesize finite-state systems from safety specifications.  We started with a
learning-based method that can be implemented with a QBF-solver.  Next, we
proposed an efficient implementation using a SAT-solver, an optimization using
reachability information, and an efficient parallelization that achieves a
super-linear speedup by combining different methods and solvers. Complementary
to that, we also presented synthesis methods based on templates or reduction to
EPR.  From our extensive case study, we conclude that these new methods can
complement BDD-based approaches, and outperform other existing
work~\cite{MorgensternGS13} by orders of magnitude.

In the future, we plan to fine-tune our optimizations and heuristics using
larger benchmark sets.  We also plan to research and compare different methods
for the extraction of circuits from the winning region.

\subsection*{Acknowledgments}
We thank Aaron R. Bradley for fruitful discussions about using IC3-concepts in
synthesis, Andreas Morgenstern for his support in
re-implementing~\cite{MorgensternGS13} and translating benchmarks, Bettina
K\"onighofer also for providing benchmarks, and Fabian Tschiatschek and Mario
Werner for their BDD-based synthesis tool.

\bibliography{references_short}

\newpage
\appendix
\section{Utilizing Unreachable States}

\subsection{Proof of Theorem~\ref{th:gen_reach1}}

\label{sec:appendix_proof}

Theorem~\ref{th:gen_reach1} (cf.~Section.~\ref{sec:reach}) can be proven as
follows.

\begin{proof}
By contradiction, assume that there exists such as state $\mathbf{x}_a$.  Any
implementation $\mathcal{I}$ derived from $W$ will only visit states in $W$.
Hence, there must exist a finite prefix $\mathbf{x}_0,\ldots \mathbf{x}_n$ of an
execution of $\mathcal{S}$ with $\mathbf{x}_0 \models I$, $\mathbf{x}_n =
\mathbf{x}_a$, $n\geq 1$ ($\mathbf{x}_a$ cannot be initial because this would
satisfy Eq.~\ref{eq:gen_reach}), and $\mathbf{x}_j \models W$ for all $0\leq
j\leq n$.  Such a trace is illustrated in Figure~\ref{fig:gen_reach_proof}.
Since $\mathbf{x}_a \models G \wedge \mathbf{x}_g$, there must exist a smallest
$k\leq n$ such that $\mathbf{x}_j \models G\wedge \mathbf{x}_g$ for all $k \leq
j \leq n$.  Now, $\mathbf{x}_k$ is either initial or has a predecessor
$\mathbf{x}_{k-1}$ in $G \wedge \neg \mathbf{x}_g$ (because
$\mathbf{x}_{k-1}\models W, W\Rightarrow G$, and $\mathbf{x}_{k-1}\not\models
G\wedge \mathbf{x}_g$). Since Eq.~\ref{eq:gen_reach} is unsatisfiable, the
protagonist cannot enforce to go from $\mathbf{x}_k$ to $G$. Hence,
$\mathbf{x}_k$ is not winning for the protagonist and cannot be part of $W$.
This contradiction means that such a path of reachable states ending in
$\mathbf{x}_a$ cannot exist if Eq.~\ref{eq:gen_reach} is unsatisfiable.
\end{proof}

\begin{figure}[hb]
\centering
 \centering
 \includegraphics[width=0.3\textwidth]{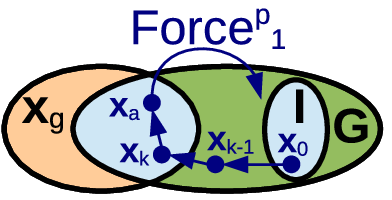}
 \caption{Optimization \textsf{RG}: Proof Illustration.}
 \label{fig:gen_reach_proof}
\end{figure}

\subsection{Optimization \textsf{RC}.}

\label{sec:appendix_rc}

For \textsc{LearnQbf} to find only counterexample-states that could be reachable
in the implementation $\mathcal{I}$, we modify the QBF-call in
line~\ref{alg:learn_ce} to
\begin{align}
(\mathsf{sat},\mathbf{x}):=\qbfsatmodel(
&
      \exists \overline{x}^*, \overline{i}^*, \overline{c}^*\scope
      \exists \overline{x},\overline{i} \scope
      \forall \overline{c} \scope
      \exists \overline{x}' \scope
\notag\\&
(I(\overline{x}) \vee (\overline{x}^*\neq \overline{x}) \wedge
F(\overline{x}^*)
\wedge T(\overline{x}^*, \overline{i}^*,\overline{c}^*, \overline{x}))
\wedge \quad
\label{eq:comp_reach}
\\&
F(\overline{x}) \wedge
T(\overline{x},\overline{i},\overline{c},\overline{x}') \wedge \neg
F(\overline{x}').\notag
\end{align}
Only the second line of the formula is new.  Here, $\overline{x}^*$,
$\overline{i}^*$, and $\overline{c}^*$ are the previous-state copies of
$\overline{x}$, $\overline{i}$, and $\overline{c}$, respectively.  The
additional constraint requires that the counterexample-state $\mathbf{x}$ is
either initial, or it has a predecessor in $F$ that is different from
$\mathbf{x}$.  Otherwise it must be unreachable and can be ignored. A state
$\mathbf{x}\not\models I$ that has itself as the only predecessor in $F$ is, of
course, unreachable as well.  When using optimization \textsf{RC}, the resulting
winning region $W$ of \textsc{LearnQbf} may not satisfy condition (III), i.e.,
$W \Rightarrow \FS{1}(W)$, but only the negation of Eq.~\ref{eq:comp_reach}.
Still, a safe controller can easily be extracted, e.g., by computing Skolem
functions for the $\overline{c}$-signals in this negation of
Eq.~\ref{eq:comp_reach}.

In case of unrealizability, optimization \textsf{RC} cannot make an unrealizable
specification be identified as realizable.  This also holds in combination with
optimization \textsf{RG} and is argued by the following theorem.
\begin{theorem}
For an unrealizable specification, when using optimization \textsf{RC},
\textsc{LearnQbf} will always find Eq.~\ref{eq:comp_reach} satisfiable.
\end{theorem}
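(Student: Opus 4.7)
The plan is to argue by contradiction: suppose that during some iteration of \textsc{LearnQbf} with optimization \textsf{RC} the query of Eq.~\ref{eq:comp_reach} is unsatisfiable. I would then build a safe controller for the specification, contradicting unrealizability. First I would record two invariants that hold whenever the algorithm reaches the query: (i) $F\Rightarrow P$, because $F$ is initialized to $P$ and only strengthened by conjunction with clauses; (ii) $I\Rightarrow F$, since otherwise the unrealizability check following the inner for-loop of \textsc{LearnQbf} would already have returned $\false$.

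Unsatisfiability of Eq.~\ref{eq:comp_reach} then says exactly: for every $\mathbf{x}\models F$ that is either initial or has an $F$-predecessor distinct from itself, and every antagonist input $\mathbf{i}$, there is a $\mathbf{c}$ whose unique successor lies in $F$. I would use this to define a controller $f$ that picks such a $\mathbf{c}$ for every such pair $(\mathbf{x},\mathbf{i})$ (and arbitrarily elsewhere). The core step is then an induction on execution length showing that, under $f$, every reachable state $\mathbf{x}_j$ satisfies both $\mathbf{x}_j\models F$ and the reachability precondition of Eq.~\ref{eq:comp_reach}, so that $f(\mathbf{x}_j,\mathbf{i}_j)$ is defined and keeps the successor in $F\subseteq P$. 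The base case uses $I\Rightarrow F$ and the $I(\overline{x})$ disjunct directly.

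The main obstacle I foresee is the inductive step when the execution stutters, i.e.~$\mathbf{x}_{j-1}=\mathbf{x}_j$, so the immediate predecessor does not witness the ``distinct $F$-predecessor'' disjunct. I would handle this by taking the largest $k<j$ with $\mathbf{x}_k\neq \mathbf{x}_j$; such $k$ exists because $\mathbf{x}_0\models I$ while (in the interesting case) $\mathbf{x}_j\not\models I$. By maximality of $k$ we have $\mathbf{x}_{k+1}=\mathbf{x}_j$, so $\mathbf{x}_k\in F$ with $\mathbf{x}_k\neq \mathbf{x}_j$ serves as the required predecessor, and the transition $T(\mathbf{x}_k,\mathbf{i}_k,\mathbf{c}_k,\mathbf{x}_j)$ along the trace supplies the witness for the second disjunct. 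The argument carries over verbatim when \textsf{RG} is additionally enabled, since \textsf{RG} only modifies the generalization step and can only make $F$ shrink more aggressively; it leaves the counterexample query of Eq.~\ref{eq:comp_reach} and both invariants relied upon above intact.
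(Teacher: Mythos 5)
Your proof is correct and follows essentially the same route as the paper's: both argue by contradiction from unsatisfiability of Eq.~\ref{eq:comp_reach}, rely on the invariants $I\Rightarrow F$ and $F\Rightarrow P$, and resolve the stuttering/self-loop issue by picking the largest index $k$ at which the trace differs from the current state so that $\mathbf{x}_k$ serves as the distinct $F$-predecessor. The only difference is one of presentation: the paper directly asserts that unrealizability yields a state in $F$, reachable within $F$, from which the antagonist escapes, whereas you make the underlying justification explicit by constructing the safe controller --- a slightly more detailed rendering of the same argument.
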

\begin{proof}
By contradiction, assume that Eq.~\ref{eq:comp_reach} is unsatisfiable.  We have
that $I\Rightarrow F$ (otherwise \textsc{LearnQbf} would already have terminated
signaling unrealizability) and $F\Rightarrow P$.  Since the specification is
unrealizable, there must exist a state $\mathbf{x}_a\models F$ which is
reachable from within $F$ and from which the antagonist can enforce to leave
$F$.  That is, $\exists \overline{x},\overline{i} \scope \forall \overline{c}
\scope \exists \overline{x}' \scope \mathbf{x}_a \wedge T \wedge \neg F'$, and
there exists a prefix $\mathbf{x}_0,\ldots \mathbf{x}_n$ of an execution of
$\mathcal{S}$ with $\mathbf{x}_0 \models I$,  $\mathbf{x}_n = \mathbf{x}_a$, and
$\mathbf{x}_j \models F$ for all $0\leq j\leq n$.  If $\mathbf{x}_a\models I$,
then Eq.~\ref{eq:comp_reach} is satisfied.  If $\mathbf{x}_a\not\models I$, then
there must exist a maximum $k$ such that $\mathbf{x}_k \neq \mathbf{x}_a$. Since
$\mathbf{x}_k \models F$, this would also satisfy Eq.~\ref{eq:comp_reach}. This
contradiction implies that \textsc{LearnQbf} cannot find Eq.~\ref{eq:comp_reach}
unsatisfiable in case of unrealizability.
\end{proof}

\section{More Detailed Experimental Results}

\subsection{Benchmark Creation}

\label{sec:appendix_bench}

Most of the benchmarks were created as follows.  First, we created a declarative
system description in Verilog (i.e., stated which behavior is allowed/not
allowed). Second, the Verilog file was translated into the BLIF-MV format using
\vltomv\footnote{\url{http://vlsi.colorado.edu/~vis/}}. Third, the BLIF-MV file
was translated into \aiger format using \Abc~\cite{BraytonM10}.

The \texttt{amba} and \texttt{genbuf} benchmarks are translations of \ratsy's
input files\footnote{\url{http://rat.fbk.eu/ratsy/}} into \aiger using the flow
described above.  \ratsy takes as input specifications in so-called
``Generalized Reactivity(1)'' format.  Such specifications consist of two parts:
assumptions and guarantees.  Both parts consist of safety and fairness
properties.  We used two different methods to reduce these Generalized
Reactivity(1) specifications into pure safety specification: $j=c$ in the
benchmark name \texttt{amba}$ij$ or \texttt{genbuf}$ij$ means that all fairness
assumptions are compressed into one fairness assumption $X$ using a counting
construction.  The same is done for the fairness guarantees.  Finally, an
additional counter is introduced. It is incremented whenever $X$ is satisfied,
it is reset whenever the counting construction for the fairness guarantees
switches to the next guarantee, and it must never reach a given value $N$.  This
enforces a certain ratio between the progress in the fairness assumptions and
guarantees.  We set $N$ to be the minimal value for which the resulting safety
specification is realizable. The value $j=b$ in the benchmark name refers to the
same construction, but using a special counting construction.  It has one bit
per fairness assumption and guarantee. This bit tracks if the property has
already been satisfied or not. Hence, $j=b$ allows the implementation to satisfy
guarantees in arbitrary order, while $j=c$ enforces a certain order between the
fairness properties, but uses less state bits.

\subsection{More Performance Results}

\label{sec:time_tables}
Table~\ref{table:results1} and Table~\ref{table:results2} summarize the size of
the benchmarks, as well as the time needed by the different synthesis methods to
compute the winning region.  The circuit extraction time is not included.  The
suffix ``k'' stands for a multiplication of the respective number by $1000$. The
entries ``$>$10k'' mark a time-out with a limit of $10\,000$ seconds.  The
entries ``$>$4GB'' indicate that the memory limit of $4$\,GB was exceeded. More
detailed performance data (more different configurations of the implementations,
more benchmarks, and more detailed statistics like numbers of iterations,
solving times for different kinds of queries, etc.) can be found in the
downloadable archive\footnote{\url{
www.iaik.tugraz.at/content/research/design_verification/demiurge/}}.

\begin{table*}[ht]
\setlength{\tabcolsep}{2.8pt}
\caption{More extensive performance results, part 1.}
\label{table:results1}
\centering
\scriptsize
\begin{tabular}{lccccccccccccccccccccccccccccccccc}
\toprule[1.3pt]
                     &\multicolumn{4}{c}{Size}
                     &&\multicolumn{10}{c}{Execution Time}
                     \\
                     &$|\overline{i}|$
                     &$|\overline{c}|$
                     &$|\overline{x}|$
                     &$G$
                     &
                     &BDD
                     &PDM
                     &QAGB
                     &SM
                     &SGM
                     &P1
                     &P2
                     &P3
                     &TB
                     &EPR
                     \\
\cmidrule{2-5}
\cmidrule{7-16}
                    &[-]    &[-]  &[-]    &[-]
                    &
                    &[sec]
                    &[sec]
                    &[sec]
                    &[sec]
                    &[sec]
                    &[sec]
                    &[sec]
                    &[sec]
                    &[sec]
                    &[sec]
\\
\cmidrule{1-5}
\cmidrule{7-16}
 \texttt{add2n}     & 4  & 2   & 2   & 23    && 0.1    & 0.1    & 0.1    & 0.1    & 0.1    & 1      & 1      & 1      & 0.1     & 1.0    \\
 \texttt{add2y}     & 4  & 2   & 2   & 17    && 0.1    & 0.1    & 0.1    & 0.1    & 0.1    & 1      & 1      & 1      & 0.1     & 0.1    \\
 \texttt{add4n}     & 8  & 4   & 2   & 61    && 0.1    & 0.1    & 0.1    & 0.1    & 0.1    & 1      & 1      & 1      & 0.1     & 125    \\
 \texttt{add4y}     & 8  & 4   & 2   & 45    && 0.1    & 0.1    & 0.1    & 0.1    & 0.1    & 1      & 1      & 1      & 0.1     & 120    \\
 \texttt{add6n}     & 12 & 6   & 2   & 99    && 0.1    & 3.1    & 0.2    & 1.6    & 1.2    & 2      & 1      & 1      & 0.1     & $>$4GB \\
 \texttt{add6y}     & 12 & 6   & 2   & 73    && 0.1    & 2.7    & 0.3    & 1.3    & 1.0    & 3      & 1      & 1      & 0.1     & $>$4GB \\
 \texttt{add8n}     & 16 & 8   & 2   & 137   && 0.1    & 126    & 3.2    & 111    & 102    & 329    & 71     & 86     & 0.1     & $>$4GB \\
 \texttt{add8y}     & 16 & 8   & 2   & 101   && 0.1    & 87     & 2.3    & 124    & 95     & 318    & 81     & 79     & 0.1     & $>$4GB \\
 \texttt{add10n}    & 20 & 10  & 2   & 175   && 0.1    & $>$10k & 163    & $>$10k & $>$10k & $>$10k & $>$10k & $>$10k & 0.1     & $>$4GB \\
 \texttt{add10y}    & 20 & 10  & 2   & 129   && 0.1    & $>$10k & 41     & $>$10k & $>$10k & $>$10k & $>$10k & $>$10k & 0.1     & $>$4GB \\
 \texttt{add12n}    & 24 & 12  & 2   & 213   && 0.1    & $>$10k & $>$10k & $>$10k & $>$10k & $>$10k & $>$10k & $>$10k & 0.1     & $>$4GB \\
 \texttt{add12y}    & 24 & 12  & 2   & 157   && 0.1    & $>$10k & $>$10k & $>$10k & $>$10k & $>$10k & $>$10k & $>$10k & 0.1     & $>$4GB \\
 \texttt{add14n}    & 28 & 14  & 2   & 251   && 0.1    & $>$10k & $>$10k & $>$10k & $>$10k & $>$10k & $>$10k & $>$10k & 0.1     & $>$4GB \\
 \texttt{add14y}    & 28 & 14  & 2   & 185   && 0.1    & $>$10k & $>$10k & $>$10k & $>$10k & $>$10k & $>$10k & $>$10k & 0.1     & $>$4GB \\
 \texttt{add16n}    & 32 & 16  & 2   & 289   && 0.1    & $>$10k & $>$10k & $>$10k & $>$10k & $>$10k & $>$10k & $>$10k & 0.1     & $>$4GB \\
 \texttt{add16y}    & 32 & 16  & 2   & 213   && 0.1    & $>$10k & $>$10k & $>$10k & $>$10k & $>$10k & $>$10k & $>$10k & 0.1     & $>$4GB \\
 \texttt{add18n}    & 36 & 18  & 2   & 327   && 0.1    & $>$10k & $>$10k & $>$10k & $>$10k & $>$10k & $>$10k & $>$10k & 0.1     & $>$4GB \\
 \texttt{add18y}    & 36 & 18  & 2   & 241   && 0.1    & $>$10k & $>$10k & $>$10k & $>$10k & $>$10k & $>$10k & $>$10k & 0.1     & $>$4GB \\
 \texttt{add20n}    & 40 & 20  & 2   & 365   && 0.1    & $>$10k & $>$10k & $>$10k & $>$10k & $>$10k & $>$10k & $>$10k & 0.1     & $>$4GB \\
 \texttt{add20y}    & 40 & 20  & 2   & 269   && 0.1    & $>$10k & $>$10k & $>$10k & $>$10k & $>$10k & $>$10k & $>$10k & 0.1     & $>$4GB \\
\cmidrule{1-5}
\cmidrule{7-16}
 \texttt{cnt4n}     & 1  & 1   & 5   & 60    && 0.1    & 0.1    & 0.1    & 0.1    & 0.1    & 1      & 1      & 1      & 0.1     & 5.0    \\
 \texttt{cnt4y}     & 1  & 1   & 5   & 23    && 0.1    & 0.1    & 0.1    & 0.1    & 0.1    & 1      & 1      & 1      & 0.1     & 2.0    \\
 \texttt{cnt5n}     & 1  & 1   & 6   & 75    && 0.1    & 0.1    & 0.5    & 0.1    & 0.1    & 1      & 1      & 1      & 0.2     & 40     \\
 \texttt{cnt5y}     & 1  & 1   & 6   & 29    && 0.1    & 0.1    & 0.1    & 0.1    & 0.1    & 1      & 1      & 1      & 0.2     & 14     \\
 \texttt{cnt6n}     & 1  & 1   & 7   & 90    && 0.1    & 0.1    & 1.6    & 0.1    & 0.1    & 1      & 1      & 1      & 0.1     & $>$4GB \\
 \texttt{cnt6y}     & 1  & 1   & 7   & 35    && 0.1    & 0.1    & 0.4    & 0.1    & 0.1    & 1      & 1      & 1      & 0.1     & 81     \\
 \texttt{cnt7n}     & 1  & 1   & 8   & 105   && 0.1    & 0.6    & 4.6    & 0.1    & 0.1    & 1      & 1      & 1      & 0.1     & $>$4GB \\
 \texttt{cnt7y}     & 1  & 1   & 8   & 41    && 0.1    & 0.4    & 1.2    & 0.1    & 0.1    & 1      & 1      & 1      & 0.1     & $>$4GB \\
 \texttt{cnt8n}     & 1  & 1   & 9   & 120   && 0.1    & 3.9    & 13     & 0.1    & 0.2    & 1      & 1      & 1      & 0.2     & $>$4GB \\
 \texttt{cnt8y}     & 1  & 1   & 9   & 47    && 0.1    & 3.1    & 3.2    & 0.1    & 0.1    & 1      & 1      & 1      & 0.2     & $>$4GB \\
 \texttt{cnt9n}     & 1  & 1   & 10  & 135   && 0.1    & 27     & 34     & 0.2    & 0.6    & 1      & 1      & 1      & 0.2     & $>$4GB \\
 \texttt{cnt9y}     & 1  & 1   & 10  & 53    && 0.1    & 24     & 8.5    & 0.1    & 0.3    & 1      & 1      & 1      & 0.2     & $>$4GB \\
 \texttt{cnt10n}    & 1  & 1   & 11  & 150   && 0.2    & 213    & 87     & 0.4    & 1.5    & 1      & 1      & 1      & 0.2     & $>$4GB \\
 \texttt{cnt10y}    & 1  & 1   & 11  & 59    && 0.2    & 208    & 22     & 0.2    & 0.9    & 1      & 1      & 1      & 0.2     & $>$4GB \\
 \texttt{cnt11n}    & 1  & 1   & 12  & 165   && 0.4    & 1.8k   & 220    & 1.0    & 4.2    & 2      & 3      & 3      & 0.6     & $>$4GB \\
 \texttt{cnt11y}    & 1  & 1   & 12  & 65    && 0.4    & 1.8k   & 56     & 0.5    & 2.8    & 1      & 1      & 1      & 0.3     & $>$4GB \\
 \texttt{cnt15n}    & 1  & 1   & 16  & 225   && 7.2    & $>$10k & 7.4k   & 55     & 621    & 232    & 554    & 369    & 0.4     & $>$4GB \\
 \texttt{cnt15y}    & 1  & 1   & 16  & 89    && 7.1    & $>$10k & 2.0k   & 38     & 576    & 205    & 620    & 386    & 0.4     & $>$4GB \\
 \texttt{cnt20n}    & 1  & 1   & 21  & 300   && 276    & $>$10k & $>$10k & $>$10k & $>$10k & $>$10k & $>$10k & $>$10k & 2.4     & $>$4GB \\
 \texttt{cnt20y}    & 1  & 1   & 21  & 119   && 275    & $>$10k & $>$10k & $>$10k & $>$10k & $>$10k & $>$10k & $>$10k & 1.8     & $>$4GB \\
 \texttt{cnt25n}    & 1  & 1   & 26  & 375   && $>$10k & $>$10k & $>$10k & $>$10k & $>$10k & $>$10k & $>$10k & $>$10k & 1.0k    & $>$4GB \\
 \texttt{cnt25y}    & 1  & 1   & 26  & 149   && $>$10k & $>$10k & $>$10k & $>$10k & $>$10k & $>$10k & $>$10k & $>$10k & $>$10k  & $>$4GB \\
 \texttt{cnt30n}    & 1  & 1   & 31  & 450   && $>$10k & $>$10k & $>$10k & $>$10k & $>$10k & $>$10k & $>$10k & $>$10k & $>$10k  & $>$4GB \\
 \texttt{cnt30y}    & 1  & 1   & 31  & 179   && $>$10k & $>$10k & $>$10k & $>$10k & $>$10k & $>$10k & $>$10k & $>$10k & 0.7     & $>$4GB \\
\cmidrule{1-5}
\cmidrule{7-16}
 \texttt{mult2}     & 4  & 4   & 0   & 24    && 0.1    & 0.1    & 0.1    & 0.1    & 0.1    & 1      & 1      & 1      & 0.1     & 0.1    \\
 \texttt{mult4}     & 8  & 8   & 0   & 128   && 0.1    & 0.1    & 0.1    & 0.1    & 0.1    & 1      & 1      & 1      & 0.1     & 30     \\
 \texttt{mult5}     & 10 & 10  & 0   & 217   && 0.1    & 0.9    & 0.3    & 0.4    & 0.4    & 1      & 1      & 1      & 0.1     & $>$4GB \\
 \texttt{mult6}     & 12 & 12  & 0   & 322   && 0.5    & 6.7    & 0.7    & 3.3    & 3.0    & 5      & 2      & 2      & 0.1     & $>$4GB \\
 \texttt{mult7}     & 14 & 14  & 0   & 455   && 1.4    & 45     & 3.6    & 25     & 22     & 33     & 16     & 13     & 0.1     & $>$4GB \\
 \texttt{mult8}     & 16 & 16  & 0   & 604   && 48     & 284    & 17     & 264    & 182    & 519    & 172    & 140    & 0.1     & $>$4GB \\
 \texttt{mult9}     & 18 & 18  & 0   & 759   && 762    & 1.6k   & 309    & 4.1k   & 2.0k   & $>$10k & 2.7k   & 2.5k   & 0.1     & $>$4GB \\
 \texttt{mult10}    & 20 & 20  & 0   & 964   && 5.4k   & $>$10k & 1.9k   & $>$10k & $>$10k & $>$10k & $>$10k & $>$10k & 0.1     & $>$4GB \\
 \texttt{mult11}    & 22 & 22  & 0   & 1.1k  && $>$10k & $>$10k & $>$10k & $>$10k & $>$10k & $>$10k & $>$10k & $>$10k & 0.1     & $>$4GB \\
 \texttt{mult12}    & 24 & 24  & 0   & 1.4k  && $>$10k & $>$10k & $>$10k & $>$10k & $>$10k & $>$10k & $>$10k & $>$10k & 0.1     & $>$4GB \\
 \texttt{mult13}    & 26 & 26  & 0   & 1.5k  && $>$10k & $>$10k & $>$10k & $>$10k & $>$10k & $>$10k & $>$10k & $>$10k & 0.1     & $>$4GB \\
 \texttt{mult14}    & 28 & 28  & 0   & 1.8k  && $>$10k & $>$10k & $>$10k & $>$10k & $>$10k & $>$10k & $>$10k & $>$10k & 0.1     & $>$4GB \\
 \texttt{mult15}    & 30 & 30  & 0   & 2.0k  && $>$10k & $>$10k & $>$10k & $>$10k & $>$10k & $>$10k & $>$10k & $>$10k & 0.1     & $>$4GB \\
 \texttt{mult16}    & 32 & 32  & 0   & 2.5k  && $>$10k & $>$10k & $>$10k & $>$10k & $>$10k & $>$10k & $>$10k & $>$10k & 0.3     & $>$4GB \\
\bottomrule[1.3pt]
\end{tabular}
\end{table*}

\begin{table*}[ht]
\centering
\setlength{\tabcolsep}{2.4pt}
\caption{More extensive performance results, part 2.}
\label{table:results2}
\scriptsize
\begin{tabular}{lccccccccccccccccccccccccccccccccc}
\toprule[1.3pt]
                     &\multicolumn{4}{c}{Size}
                     &\multicolumn{10}{c}{Execution Time}
                     \\
                     &$|\overline{i}|$
                     &$|\overline{c}|$
                     &$|\overline{x}|$
                     &$G$
                     &
                     &BDD
                     &PDM
                     &QAGB
                     &SM
                     &SGM
                     &P1
                     &P2
                     &P3
                     &TB
                     &EPR
                     \\
\cmidrule{2-5}
\cmidrule{7-16}
                    &[-]    &[-]  &[-]    &[-]
                    &
                    &[sec]
                    &[sec]
                    &[sec]
                    &[sec]
                    &[sec]
                    &[sec]
                    &[sec]
                    &[sec]
                    &[sec]
                    &[sec]
\\
\cmidrule{1-5}
\cmidrule{7-16}
 \texttt{bs08n}     & 2  & 1   & 9   &  82   && 0.1    & 0.1    & 0.3    & 0.1    & 0.1    & 1      & 1      & 1      & 0.1     & $>$4GB \\
 \texttt{bs08y}     & 2  & 1   & 9   &  80   && 0.1    & 0.1    & 0.3    & 0.1    & 0.1    & 1      & 1      & 1      & 0.1     & $>$4GB \\
 \texttt{bs16n}     & 4  & 1   & 17  &  258  && 2.0    & 0.1    & 3.0    & 0.1    & 0.1    & 1      & 1      & 1      & 0.5     & $>$4GB \\
 \texttt{bs16y}     & 4  & 1   & 17  &  256  && 2.0    & 0.1    & 2.9    & 0.1    & 0.1    & 1      & 1      & 1      & 0.4     & $>$4GB \\
 \texttt{bs32n}     & 5  & 1   & 33  &  610  && $>$10k & 0.1    & 15     & 0.1    & 0.1    & 1      & 1      & 1      & $>$10k  & $>$4GB \\
 \texttt{bs32y}     & 5  & 1   & 33  &  608  && $>$10k & 0.1    & 14     & 0.1    & 0.1    & 1      & 1      & 1      & $>$10k  & $>$4GB \\
 \texttt{bs64n}     & 6  & 1   & 65  &  1.4k && $>$10k & 0.1    & 93     & 0.1    & 0.1    & 1      & 1      & 1      & $>$10k  & $>$4GB \\
 \texttt{bs64y}     & 6  & 1   & 65  &  1.4k && $>$10k & 0.1    & 93     & 0.1    & 0.1    & 1      & 1      & 1      & $>$10k  & $>$4GB \\
 \texttt{bs128n}    & 7  & 1   & 129 &  3.2k && $>$10k & 0.1    & 707    & 0.1    & 0.1    & 1      & 1      & 1      & $>$10k  & $>$4GB \\
 \texttt{bs128y}    & 7  & 1   & 129 &  3.2k && $>$10k & 0.1    & 706    & 0.1    & 0.1    & 1      & 1      & 1      & $>$10k  & $>$4GB \\
\cmidrule{1-5}
\cmidrule{7-16}
 \texttt{genbuf01c} & 5  & 6   & 21  &  134  && 0.2    & 11     & 105    & 2.4    & 5.5    & 2      & 1      & 1      & $>$10k  & $>$4GB \\
 \texttt{genbuf02c} & 6  & 7   & 24  &  169  && 0.4    & 605    & 555    & 32     & 50     & 16     & 9      & 7      & $>$10k  & $>$4GB \\
 \texttt{genbuf03c} & 7  & 9   & 27  &  202  && 0.7    & 3.8k   & 2.1k   & 159    & 109    & 43     & 14     & 15     & $>$10k  & $>$4GB \\
 \texttt{genbuf04c} & 8  & 10  & 30  &  242  && 2.9    & $>$10k & 4.7k   & 2.1k   & 892    & 270    & 257    & 53     & $>$10k  & $>$4GB \\
 \texttt{genbuf05c} & 9  & 12  & 33  &  284  && 2.7    & $>$10k & $>$10k & 5.7k   & 1.5k   & 1.2k   & 421    & 95     & $>$10k  & $>$4GB \\
 \texttt{genbuf06c} & 10 & 13  & 35  &  323  && 2.2    & $>$10k & $>$10k & $>$10k & 3.9k   & 921    & 852    & 201    & $>$10k  & $>$4GB \\
 \texttt{genbuf07c} & 11 & 14  & 37  &  361  && 6.2    & $>$10k & $>$10k & $>$10k & 3.3k   & 2.1k   & 314    & 201    & $>$10k  & $>$4GB \\
 \texttt{genbuf08c} & 12 & 15  & 40  &  406  && 8.4    & $>$10k & $>$10k & $>$10k & $>$10k & $>$10k & 6.5k   & 2.4k   & $>$10k  & $>$4GB \\
 \texttt{genbuf09c} & 13 & 17  & 43  &  463  && 13     & $>$10k & $>$10k & $>$10k & $>$10k & $>$10k & $>$10k & 2.0k   & $>$10k  & $>$4GB \\
 \texttt{genbuf10c} & 14 & 18  & 45  &  494  && 9.1    & $>$10k & $>$10k & $>$10k & $>$10k & $>$10k & $>$10k & 2.0k   & $>$10k  & $>$4GB \\
 \texttt{genbuf11c} & 15 & 19  & 47  &  531  && 33     & $>$10k & $>$10k & $>$10k & $>$10k & $>$10k & $>$10k & 2.3k   & $>$10k  & $>$4GB \\
 \texttt{genbuf12c} & 16 & 20  & 49  &  561  && 34     & $>$10k & $>$10k & $>$10k & $>$10k & $>$10k & $>$10k & $>$10k & $>$10k  & $>$4GB \\
 \texttt{genbuf13c} & 17 & 21  & 51  &  602  && 49     & $>$10k & $>$10k & $>$10k & $>$10k & $>$10k & $>$10k & $>$10k & $>$10k  & $>$4GB \\
 \texttt{genbuf14c} & 18 & 22  & 53  &  639  && 101    & $>$10k & $>$10k & $>$10k & $>$10k & $>$10k & $>$10k & $>$10k & $>$10k  & $>$4GB \\
\cmidrule{1-5}
\cmidrule{7-16}
 \texttt{genbuf01b} & 5  & 6   & 23  &  141  && 0.2    & 6.7    & 146    & 2.7    & 2.9    & 2      & 1      & 1      & $>$10k  & $>$4GB \\
 \texttt{genbuf02b} & 6  & 7   & 26  &  174  && 0.5    & 1.1k   & 640    & 54     & 9.6    & 7      & 3      & 3      & $>$10k  & $>$4GB \\
 \texttt{genbuf03b} & 7  & 9   & 30  &  208  && 1.2    & $>$10k & 1.2k   & 845    & 26     & 23     & 8      & 8      & $>$10k  & $>$4GB \\
 \texttt{genbuf04b} & 8  & 10  & 33  &  245  && 1.1    & $>$10k & 8.1k   & 6.2k   & 48     & 48     & 17     & 14     & $>$10k  & $>$4GB \\
 \texttt{genbuf05b} & 9  & 12  & 37  &  282  && 4.7    & $>$10k & $>$10k & $>$10k & 90     & 123    & 33     & 18     & $>$10k  & $>$4GB \\
 \texttt{genbuf06b} & 10 & 13  & 40  &  322  && 3.1    & $>$10k & $>$10k & $>$10k & 171    & 194    & 48     & 49     & $>$10k  & $>$4GB \\
 \texttt{genbuf07b} & 11 & 14  & 43  &  358  && 3.5    & $>$10k & $>$10k & $>$10k & 263    & 326    & 84     & 95     & $>$10k  & $>$4GB \\
 \texttt{genbuf08b} & 12 & 15  & 46  &  395  && 2.8    & $>$10k & $>$10k & $>$10k & 396    & 391    & 176    & 106    & $>$10k  & $>$4GB \\
 \texttt{genbuf09b} & 13 & 17  & 50  &  443  && 4.1    & $>$10k & $>$10k & $>$10k & 895    & 1.7k   & 302    & 534    & $>$10k  & $>$4GB \\
 \texttt{genbuf10b} & 14 & 18  & 53  &  475  && 17     & $>$10k & $>$10k & $>$10k & 1.2k   & $>$10k & 660    & 538    & $>$10k  & $>$4GB \\
 \texttt{genbuf11b} & 15 & 19  & 56  &  510  && 6.3    & $>$10k & $>$10k & $>$10k & 1.2k   & 7.2k   & 1.3k   & 1.5k   & $>$10k  & $>$4GB \\
 \texttt{genbuf12b} & 16 & 20  & 59  &  547  && 3.7    & $>$10k & $>$10k & $>$10k & 2.1k   & 4.6k   & 1.7k   & 1.0k   & $>$10k  & $>$4GB \\
 \texttt{genbuf13b} & 17 & 21  & 62  &  582  && 4.1    & $>$10k & $>$10k & $>$10k & 1.9k   & $>$10k & 2.4k   & 1.6k   & $>$10k  & $>$4GB \\
 \texttt{genbuf14b} & 18 & 22  & 65  &  617  && 6.0    & $>$10k & $>$10k & $>$10k & 2.9k   & $>$10k & 4.1k   & 2.1k   & $>$10k  & $>$4GB \\
 \texttt{genbuf15b} & 19 & 23  & 68  &  652  && 5.4    & $>$10k & $>$10k & $>$10k & 4.1k   & $>$10k & 9.8k   & 6.7k   & $>$10k  & $>$4GB \\
 \texttt{genbuf16b} & 20 & 24  & 71  &  687  && 5.5    & $>$10k & $>$10k & $>$10k & 4.6k   & $>$10k & 7.8k   & 3.9k   & $>$10k  & $>$4GB \\
\cmidrule{1-5}
\cmidrule{7-16}
 \texttt{amba02c}   & 7  & 8   & 28  &  177  && 0.6    & 647    & 2.4k   & 20     & 21     & 44     & 10     & 10     & $>$10k  & $>$4GB \\
 \texttt{amba03c}   & 9  & 10  & 34  &  237  && 3.5    & $>$10k & $>$10k & 228    & 91     & 153    & 60     & 33     & $>$10k  & $>$4GB \\
 \texttt{amba04c}   & 11 & 11  & 38  &  279  && 22     & $>$10k & $>$10k & 898    & 619    & 1.7k   & 312    & 206    & $>$10k  & $>$4GB \\
 \texttt{amba05c}   & 13 & 13  & 43  &  345  && 439    & $>$10k & $>$10k & 1.2k   & 431    & 4.1k   & 574    & 230    & $>$10k  & $>$4GB \\
 \texttt{amba06c}   & 15 & 14  & 47  &  395  && 204    & $>$10k & $>$10k & 2.1k   & 704    & 5.0k   & 674    & 314    & $>$10k  & $>$4GB \\
 \texttt{amba07c}   & 17 & 15  & 52  &  449  && 397    & $>$10k & $>$10k & 4.2k   & 1.2k   & $>$10k & 1.1k   & 458    & $>$10k  & $>$4GB \\
 \texttt{amba08c}   & 19 & 16  & 56  &  511  && 667    & $>$10k & $>$10k & $>$10k & $>$10k & $>$10k & $>$10k & $>$10k & $>$10k  & $>$4GB \\
 \texttt{amba09c}   & 21 & 18  & 61  &  583  && 9.3k   & $>$10k & $>$10k & $>$10k & 2.8k   & $>$10k & 3.9k   & 1.0k   & $>$10k  & $>$4GB \\
 \texttt{amba10c}   & 23 & 19  & 65  &  630  && 271    & $>$10k & $>$10k & $>$10k & 3.4k   & $>$10k & 3.6k   & 1.6k   & $>$10k  & $>$4GB \\
\cmidrule{1-5}
\cmidrule{7-16}
 \texttt{amba02b}   & 7  & 8   & 31  &  189  && 1.7    & 1.2k   & 8.3k   & 23     & 24     & 57     & 20     & 15     & $>$10k  & $>$4GB \\
 \texttt{amba03b}   & 9  & 10  & 36  &  231  && 13     & $>$10k & $>$10k & 207    & 70     & 131    & 29     & 30     & $>$10k  & $>$4GB \\
 \texttt{amba04b}   & 11 & 11  & 42  &  286  && 84     & $>$10k & $>$10k & 3.8k   & 761    & 4.2k   & 831    & 504    & $>$10k  & $>$4GB \\
 \texttt{amba05b}   & 13 & 13  & 47  &  344  && 403    & $>$10k & $>$10k & 2.5k   & 278    & 2.0k   & 210    & 216    & $>$10k  & $>$4GB \\
 \texttt{amba06b}   & 15 & 14  & 52  &  391  && 903    & $>$10k & $>$10k & 3.2k   & 394    & 7.8k   & 366    & 209    & $>$10k  & $>$4GB \\
 \texttt{amba07b}   & 17 & 15  & 57  &  438  && 1.5k   & $>$10k & $>$10k & 7.3k   & 1.1k   & $>$10k & 502    & 634    & $>$10k  & $>$4GB \\
 \texttt{amba08b}   & 19 & 16  & 62  &  486  && $>$10k & $>$10k & $>$10k & $>$10k & $>$10k & $>$10k & $>$10k & $>$10k & $>$10k  & $>$4GB \\
 \texttt{amba09b}   & 21 & 18  & 68  &  558  && $>$10k & $>$10k & $>$10k & $>$10k & 4.9k   & $>$10k & 1.2k   & 1.9k   & $>$10k  & $>$4GB \\
 \texttt{amba10b}   & 23 & 19  & 73  &  606  && 3.7k   & $>$10k & $>$10k & $>$10k & 6.4k   & $>$10k & 3.0k   & 5.8k   & $>$10k  & $>$4GB \\
\bottomrule[1.3pt]
\end{tabular}
\end{table*}

\end{document}